\newtheorem {Theorem}                 {Theorem}         [section]
\newtheorem {theorem}      [Theorem]  {Theorem}
\newtheorem {myalgorithm}    [Theorem]  {Algorithm}
\newtheorem {Lemma}        [Theorem]  {Lemma}
\newtheorem {lemma}        [Theorem]  {Lemma}
\journal{arXiv}
\begin{document}
\begin{frontmatter}
\title{Twinless articulation points and some related problems}
\author{Raed Jaberi}

\begin{abstract}  
	Let $G=(V,E)$ be a twinless strongly connected graph. a vertex $v\in V$ is a twinless articulation point if the subrgraph obtained from $G$ by removing the vertex $v$ is not twinless strongly connected. An edge $e\in E$ is a twinless bridge if the subgraph obtained from $G$ by deleting $e$ is not twiless strongly connected graph. In this paper we study twinless articulation points and twinless bridges.
	We also study the problem of finding a minimum cardinality edge subset $E_{1} \subseteq E$ such that the subgraph $(V,E_{1})$ is twinless strongly connected. Moreover, we present an algorithm for computing the $2$-vertex-twinless connected components of $G$.
\end{abstract} 
\begin{keyword}
Directed graphs \sep Strong articulation points \sep Strong bridges  \sep Graph algorithms \sep Approximation algorithms \sep Twinless strongly connected graphs
\end{keyword}
\end{frontmatter}
\section{Introduction}
Let $G=(V,E)$ be a twinless strongly connected graph. a vertex $v\in V$ is a twinless articulation point if the subrgraph obtained from $G$ by removing the vertex $v$ is not twinless strongly connected. An edge $e\in E$ is a twinless bridge if the subgraph obtained from $G$ by deleting $e$ is not twiless strongly connected graph.
A twinless strongly connected graph $G$ is $k$-vertex-twinless-connected if $|V|\geq k+1$ and for each $U\subset V$ with $|U|<k$, the induced subgraph on $V\setminus U$ is twinless strongly connected. Thus, a twinless strongly connected graph $G$ is $2$-vertex-twinless-connected if and only if it $|V|\geq 3$ and it does not contain any twinless articulation point. A twinless strongly connected graph $G$ is $2$-edge-twinless-connected if $|V|>2$  and $G$ has no twinless briges. A $2$-vertex-twinless-connected component is a maximal subset $U^{2vt} \subseteq V$ such that the induced subgraph on $U^{2vt}$ is $2$-vertex-twinless-connected (see Figure \ref{fig:newconcepts} and Figure \ref{fig:2vtcgexample}).

 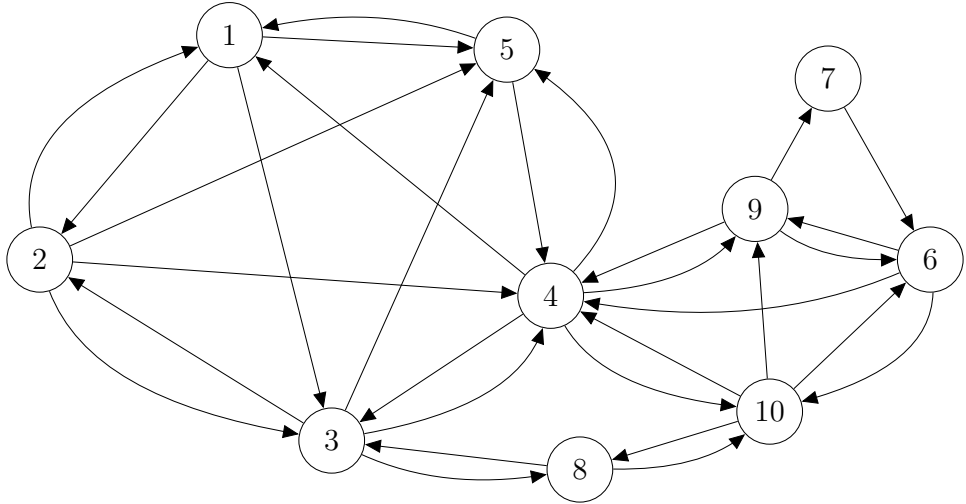
\begin{figure}[htp]
	\centering
	\scalebox{0.96}{
		\begin{tikzpicture}[xscale=2]
		\tikzstyle{every node}=[color=black,draw,circle,minimum size=0.9cm]
		\node (v1) at (-1.2,3.1) {$1$};
		\node (v2) at (-2.5,0) {$2$};
		\node (v3) at (-0.5, -2.5) {$3$};
		\node (v4) at (1,-0.5) {$4$};
		\node (v5) at (0.7,2.9) {$5$};
		\node (v6) at (3.6,0) {$6$};
		\node (v7) at (2.9,2.5) {$7$};
		\node (v8) at (1.2,-2.9) {$8$};
		\node (v9) at (2.4,0.7) {$9$};
		\node (v10) at (2.5,-2.1) {$10$};
		
		\begin{scope}   
		\tikzstyle{every node}=[auto=right]   
		\draw [-triangle 45] (v1) to (v2);
		\draw [-triangle 45] (v2) to [bend left ](v1);
		\draw [-triangle 45] (v1) to (v3);
		\draw [-triangle 45] (v3) to (v5);
		\draw [-triangle 45] (v2) to [bend right ] (v3);
		\draw [-triangle 45] (v3) to (v2);
		\draw [-triangle 45] (v3) to [bend right ](v4);
		\draw [-triangle 45] (v4) to (v3);
		\draw [-triangle 45] (v4) to [bend right ](v5);
		\draw [-triangle 45] (v5) to (v4);
		\draw [-triangle 45] (v1) to (v5);
		\draw [-triangle 45] (v2) to (v5);
		\draw [-triangle 45] (v5) to[bend right ] (v1);
		\draw [-triangle 45] (v4) to(v1);
		\draw [-triangle 45] (v2) to (v4);
		
		\draw [-triangle 45] (v6) to [bend left ](v4);
		\draw [-triangle 45] (v4) to[bend right ] (v9);
		\draw [-triangle 45] (v9) to [bend right ] (v6);
		\draw [-triangle 45] (v6) to (v9);
		\draw [-triangle 45] (v10) to (v6);
		\draw [-triangle 45] (v10) to (v9);
		\draw [-triangle 45] (v8) to (v3);
		\draw [-triangle 45] (v3) to[bend right ] (v8);
		\draw [-triangle 45] (v8) to[bend right ] (v10);
		\draw [-triangle 45] (v10) to (v8);
		\draw [-triangle 45] (v9) to (v7);
		\draw [-triangle 45] (v7) to (v6);
			\draw [-triangle 45] (v10) to (v4);
		\draw [-triangle 45] (v4) to[bend right ]  (v10);
		\draw [-triangle 45] (v6) to[bend left ]  (v10);
		\draw [-triangle 45] (v9) to  (v4);
		\end{scope}
		\end{tikzpicture}}
	\caption{A strongly connected graph $G$, which contains one $2$-vertex-connected component $\lbrace1,2,3,4,5,6,8,9,10\rbrace$, two $2$-vertex-twinless-connected components $\lbrace 1,2,3,4.5\rbrace,\lbrace 9,10,6,4\rbrace$, two articulation points $6,9$, and five twinless articulation points $4,6,9,3,10$. Notice that vertex $4$ is a twinless articulation point but it is not a strong articulation point. Moreover, the $2$-vtcc $\left\lbrace 1,2,3,4,5\right\rbrace $ is a subset of the $2$-vertex-connected component$\lbrace1,2,3,4,5,6,8,9,10\rbrace$. }
	\label{fig:newconcepts}
\end{figure}
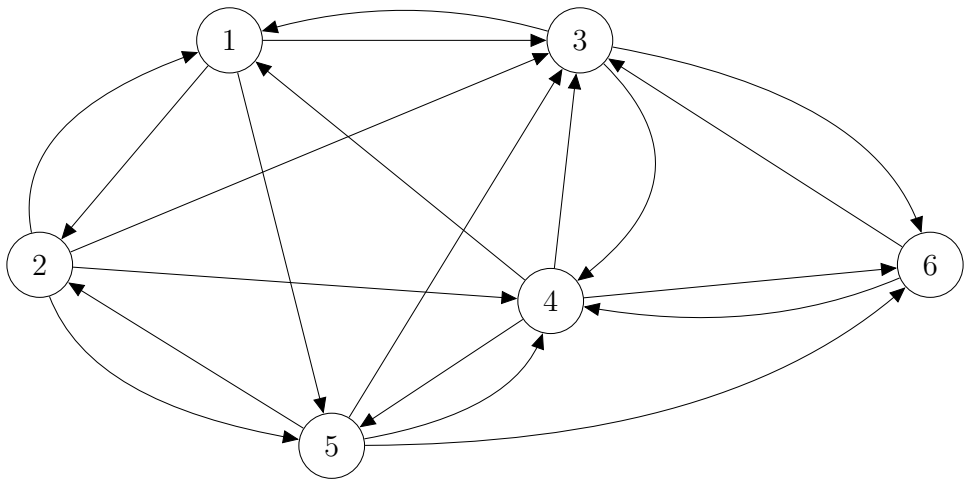
\begin{figure}[htp]
	\centering
	\scalebox{0.96}{
		\begin{tikzpicture}[xscale=2]
		\tikzstyle{every node}=[color=black,draw,circle,minimum size=0.9cm]
		\node (v1) at (-1.2,3.1) {$1$};
		\node (v2) at (-2.5,0) {$2$};
		\node (v5) at (-0.5, -2.5) {$5$};
		\node (v4) at (1,-0.5) {$4$};
		\node (v3) at (1.2,3.1) {$3$};
		\node (v6) at (3.6,0) {$6$};

		\begin{scope}   
		\tikzstyle{every node}=[auto=right]   
		\draw [-triangle 45] (v1) to (v2);
		\draw [-triangle 45] (v2) to [bend left ](v1);
		\draw [-triangle 45] (v1) to (v5);
		\draw [-triangle 45] (v5) to (v3);
		\draw [-triangle 45] (v2) to [bend right ] (v5);
		\draw [-triangle 45] (v5) to (v2);
		\draw [-triangle 45] (v3) to [bend left ](v4);
		\draw [-triangle 45] (v4) to (v5);
		
		\draw [-triangle 45] (v4) to (v3);
		\draw [-triangle 45] (v1) to (v3);
		\draw [-triangle 45] (v2) to (v3);
		\draw [-triangle 45] (v3) to[bend right ] (v1);
		\draw [-triangle 45] (v4) to(v1);
		\draw [-triangle 45] (v2) to (v4);
		
		\draw [-triangle 45] (v6) to [bend left ](v4);
     	\draw [-triangle 45] (v4) to (v6);
	    \draw [-triangle 45] (v6) to (v3);
	    \draw [-triangle 45] (v3) to[bend left ] (v6);
	    \draw [-triangle 45] (v5) to[bend right ] (v4);
	     \draw [-triangle 45] (v5) to[bend right ] (v6);
		\end{scope}
		\end{tikzpicture}}
	\caption{a $2$-vertex-twinless-connected graph $G=(V,E)$. The  graph $G$ is $2$-vertex-connected. Notice that the subgraph  $G\setminus \left\lbrace  (5,6)\right\rbrace $ is $2$-vertex-connected but it is not $2$-vertex-twinless-connected. }
	\label{fig:2vtcgexample}
\end{figure}
Tarjan \cite{T72} gave a linear time algorithm for computing the strongly connected components. In $2006$, Raghavan \cite{SR06} showed that twinless strongly connected component of a directed graph can be calculated in linear time.
In $2010$, Georgiadis \cite{G10} presented an algorithm to check whether a strongly connected graph is $2$-vertex-connected in linear time. Italiano et al. \cite{ILS12} gave linear time algorithms for finding all the strong articulation points and strong bridges of a directed graph. In $2014$, Jaberi \cite{J14} presented algorithms for computing the $2$-vertex-connected components of directed graphs in $O(nm)$ time (published in \cite{Jaberi16}). An experimental study ($2015$) \cite{LGILP15} showed that our  algorithm performs well in practice. Henzinger et al.\cite{HKL15}
gave algorithms for calculating the $2$-vertex-connected components and $2$-edge-connected components of a directed graph in $O(n^{2})$ time. Jaberi \cite{Jaberi15} presented algorithms for computing the $2$-directed blocks, $2$-strong blocks, and $2$-edge blocks of a directed graph. He also presented approximations algorithms for some optimization problems related to these blocks. Georgiadis et al. \cite{GILP14SODA} gave linear time algorithms for finding $2$-edge blocks. The same authors \cite{GILP14VertexConnectivity} gave linear time algorithms for finding $2$-directed blocks and $2$-strong blocks. 

In this paper we study twinless articulation points and twinless bridges.
We also study the following problem, deneoted by MTSCS, of finding a minimum twinless strongly connected spanning subgraph (MTSCS) of a twinless strongly connected graph. It is well known that the problem of finding a minimum strongly connected spanning subgraph (MSCSS) of a strongly connected graph is NP-hard (The Hamiltonian problem can be reduced to this problem \cite{KRY94, ZNI03,CJ79}). Notice that a directed graph $G=(V,E)$ with $|V|\geq 3$ has a Hamiltonian cycle if and only if it conatins a twinless strongly connected subgraph $G_{1}=(V,E_{1})$ with $|E_{1}|\leq n$. Therefore, the MTSCS Problem is also NP-hard.
 Moreover, we present an algorithm for computing the $2$-vertex-twinless connected components of a twinless strongly connected graph.
\section{Graph Terminology and Notation} \label{def:gtan}
A directed graph $G=(V,E)$ is called twinless strongly connected if for each pair of vertices $v,w \in V$ there is a path $p$ from $v$ to $w$ and a path $p_{1}$ from $w$ to $v$ in $G$ such that for each edge $(x,y)$ of $p$, the path $p_{1}$ does not use $(y,x)$. 
A directed graph $G=(V,E)$ is twinless strongly connected if and only if it contains a strongly connected spanning subgraph $(V,E_{1})$ that does not contain antiparallel edges.
Let $G=(V,E)$ be a strongly connected graph. The TSCC component graph $G_{tscc}=(V_{tscc},E_{tscc})$ of $G$ is the directed  graph obtained from $G$ by contracting each twinless strongly connected component into a single supervertex and replacing parallel edges by a single edge. We denote by $E^{r}$ the edge set $\left\lbrace (v,w)\mid (w,v) \in E \right\rbrace $. We use $s_{tap}$ to denote the number of strong articulation points in the directed graph $G$. We denote by $s_{sb}$ the number of strong bridges in $G$.

\section{Computing Twinless Articulation Points}
In this section we illustrate how to compute the twinless articulation points of a strongly connected graph. 
Each strong articulation point is a twinless articulation point, but the converse is not necessarily true. In Figure \ref{fig:newconcepts}, vertex $4$ is a twinless articulation point but it is not a strong articulation point.
Let $G=(V,E)$ be a strongly connected graph and let $w\in V$.
Italiano et al. \cite{ILS12} gave a linear time algorithm for finding strong articulation points in $G$. The algorithm of Italiano et al. \cite{ILS12} calculates the non-trivial dominators in the flowgraphs $(V,E,w)$ and $(V,E^{r},w)$ and tests whether the directed graph $G\setminus\left\lbrace w\right\rbrace $ is strongly connected. In \cite{SR06}, Raghavan proved that $G$ is twinless-strongly-connected if and only if the underlying undriected graph of $G$ is $2$-edge-connected.
By combining the algorithm of Italiano et al. \cite{ILS12} with  Raghavan's algorithm \cite{SR06} we obtain Algorithm \ref{algo:algorithmtwinlessarticulationps} that can compute all the twinless articulation points of $G$.

\begin{figure}[htbp]
	\begin{myalgorithm}\label{algo:algorithmtwinlessarticulationps}\rm\quad\\[-5ex]
		\begin{tabbing}
			\quad\quad\=\quad\=\quad\=\quad\=\quad\=\quad\=\quad\=\quad\=\quad\=\kill
			\textbf{Input:} A strongly connected graph $G=(V,E)$.\\
			\textbf{Output:} The set of all twinless articulation point $T^{ap}$\\
			{\small 1}\> $B \leftarrow \emptyset$\\
			{\small 2}\> choose a vertex $w \in V$, if $G\setminus  \left\lbrace w\right\rbrace $ is not strongly connect, add $w$ to $B$ \\
			{\small 3}\>\textbf{for} each non-trivial dominator $v$ in the flowgraph $(V,E,w)$ \textbf{do} \\
			{\small 4}\>\>$B \leftarrow B\cup \left\lbrace v\right\rbrace $ \\
			{\small 5}\>\textbf{for} each non-trivial dominator $v$ in the flowgraph $(V,E^{r},w)$ \textbf{do} \\
			{\small 6}\>\>$B \leftarrow B\cup \left\lbrace v\right\rbrace $ \\
			{\small 7}\> $T^{ap} \leftarrow B$.\\ 
			{\small 8}\> \textbf{for} each vertex $x \in V \setminus B$ \textbf{do} \\
			{\small 9}\>\> \textbf{if} the underlying graph of $G \setminus \left\lbrace  x\right\rbrace $ \ is not $2$-edge-connected \textbf{then}\\
			{\small 10}\>\>\> $T^{ap}\leftarrow T^{ap} \cup \left\lbrace x \right\rbrace  $.
		\end{tabbing}
	\end{myalgorithm}
\end{figure}

\begin{lemma}
	Let $G=(V,E)$ be a strongly connected graph and let $y \in V$. The vertex $y$ is a twinless articulation point if and only if $y\in T^{ap}$.
\end{lemma}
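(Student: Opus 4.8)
The plan is to prove both implications by case analysis on whether $G\setminus\{y\}$ remains strongly connected, mirroring the two phases of Algorithm~\ref{algo:algorithmtwinlessarticulationps}: the construction of $B$ in lines 1--6 and the $2$-edge-connectivity test in lines 8--10. I would rely on two ingredients. First, the theorem of Italiano et al.~\cite{ILS12}, which guarantees that the set $B$ computed in lines 1--6 (the vertex $w$ when $G\setminus\{w\}$ is not strongly connected, together with the non-trivial dominators in $(V,E,w)$ and in $(V,E^{r},w)$) is \emph{exactly} the set of strong articulation points of $G$. Second, Raghavan's characterization~\cite{SR06}: a strongly connected graph is twinless strongly connected if and only if its underlying undirected graph is $2$-edge-connected. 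I would also record the elementary observation that a twinless strongly connected graph is in particular strongly connected (the witnessing spanning subgraph without antiparallel edges is itself strongly connected), so that "not strongly connected" implies "not twinless strongly connected".

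For the direction $y\in T^{ap}\Rightarrow y$ is a twinless articulation point: if $y$ was placed into $B$, then by the result of~\cite{ILS12} it is a strong articulation point, hence $G\setminus\{y\}$ is not strongly connected and therefore not twinless strongly connected. If instead $y\in T^{ap}\setminus B$, then $y$ satisfied the test in line 9, so the underlying undirected graph of $G\setminus\{y\}$ is not $2$-edge-connected; since $y\notin B$, the vertex $y$ is not a strong articulation point, so $G\setminus\{y\}$ is strongly connected, and Raghavan's theorem then yields that $G\setminus\{y\}$ is not twinless strongly connected, i.e. $y$ is a twinless articulation point.

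Conversely, suppose $y$ is a twinless articulation point, so $G\setminus\{y\}$ is not twinless strongly connected. If $G\setminus\{y\}$ is not strongly connected, then $y$ is a strong articulation point of $G$, and by~\cite{ILS12} we get $y\in B\subseteq T^{ap}$. If $G\setminus\{y\}$ is strongly connected, then $y$ is not a strong articulation point, so by~\cite{ILS12} we have $y\notin B$, and hence $y$ is examined in the loop of lines 8--10; applying Raghavan's theorem to the strongly connected graph $G\setminus\{y\}$, the failure of twinless strong connectivity forces its underlying undirected graph to fail $2$-edge-connectivity, so the test in line 9 succeeds and $y$ is added to $T^{ap}$.

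The only genuine subtlety — and the reason the algorithm is organized this way — is that Raghavan's equivalence may only be invoked for a graph that is already strongly connected; this is precisely why $B$ is computed first and the $2$-edge-connectivity test is restricted to $V\setminus B$. So the crux of the write-up is to make explicit that for every $x\in V\setminus B$ the graph $G\setminus\{x\}$ really is strongly connected (again via the characterization of $B$ in~\cite{ILS12}), after which both implications follow immediately. A minor point worth a brief remark is the degenerate situation in which $G\setminus\{y\}$ has at most two vertices, where one simply appeals to the convention under which Raghavan's theorem is stated.
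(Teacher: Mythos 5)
Your proof is correct and follows essentially the same route as the paper's: it hinges on the characterization of $B$ as exactly the set of strong articulation points (Italiano et al., Theorem~5.2) and on Raghavan's equivalence between twinless strong connectivity and $2$-edge-connectivity of the underlying undirected graph, applied precisely to those $G\setminus\{y\}$ that remain strongly connected. The paper's own proof is terser and leaves the two implications implicit, whereas you spell them out and flag the key subtlety (Raghavan's theorem is only invoked on strongly connected subgraphs, which is why the algorithm excludes $B$ from the line-9 test); this is a faithful expansion of the same argument, not a different one.
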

\begin{proof}
By [\cite{ILS12}, Theorem $5.2.$], a vertex $s\in 	V$ is a strong articulation point if and only if $s \in B$. Let $u$ be any vertex in $T^{ap}\setminus B$. Then the subgraph $G\setminus\left\lbrace u \right\rbrace $ is strongly connected because the vertex $u$ is not a strong articulation point in the graph $G$. . By [\cite{SR06}, Theorem $2$], $G\setminus\left\lbrace u \right\rbrace$ is twinless strongly connected if and only if the underlying graph of $G\setminus\left\lbrace u \right\rbrace $ is  $2$-edge-connected. 
\end{proof}
\begin{theorem}
	Algorithm \ref{algo:algorithmtwinlessarticulationps} runs in $O((n-s_{tap})m)$ time.
\end{theorem}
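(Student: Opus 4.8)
The plan is to bill the running time of Algorithm \ref{algo:algorithmtwinlessarticulationps} to its two phases separately: the strong-articulation-point computation in lines 1--7, and the twinless-refinement loop in lines 8--10.

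For lines 1--7 I would argue as follows. Line 2 performs a single strong-connectivity check on $G\setminus\{w\}$ in $O(m)$ time, and lines 3--6 compute the non-trivial dominators of the flowgraphs $(V,E,w)$ and $(V,E^{r},w)$, which by the algorithm of Italiano et al.\ \cite{ILS12} takes linear time; since $G$ is strongly connected we have $m\ge n$, so all of lines 1--7 together cost $O(m)$. Moreover, by [\cite{ILS12}, Theorem $5.2$]---already invoked in the proof of the preceding lemma---the set $B$ produced after line 7 is precisely the set of strong articulation points of $G$, so $|B|=s_{tap}$ and hence $|V\setminus B|=n-s_{tap}$.

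For the loop in lines 8--10 I would note that it iterates once per vertex of $V\setminus B$, that is, exactly $n-s_{tap}$ times. Each iteration deletes $x$ from $G$, passes to the underlying undirected graph, and tests whether it is $2$-edge-connected; a single depth-first search that locates bridges does this in $O(n+m)=O(m)$ time. Hence the loop costs $O((n-s_{tap})m)$, and adding the $O(m)$ bound for lines 1--7 yields $O((n-s_{tap})m)$ overall (with the understanding that the estimate reads $O(m)$ in the degenerate case $V=B$, e.g.\ when $G$ is a simple directed cycle).

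I do not anticipate a genuine obstacle here. The only point that needs care is that the number of loop iterations is \emph{exactly} $n-s_{tap}$, which in turn relies on $B$ being equal to the set of strong articulation points and not merely a superset of it; this is guaranteed by the correctness of the Italiano et al.\ procedure embedded in lines 1--7.
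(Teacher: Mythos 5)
Your proposal is correct and follows essentially the same route as the paper: both bound lines 1--7 by $O(m)$ via the linear-time strong-articulation-point and dominator computations of Italiano et al., and both bound the refinement loop by noting it runs exactly $n-s_{tap}$ times with a linear-time $2$-edge-connectivity test per iteration. Your added remarks (that $B$ must coincide exactly with the set of strong articulation points, and the degenerate case $V=B$) are sound clarifications but do not change the substance of the argument.
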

\begin{proof}
The algorithm of Italiano et al. \cite{ILS12} is able to calculate all the strong articulation points of $G$ in $O(n+m)$ time since non-trivial dominators of flowgraphs $(V,E,w)$ and $(V,E^{r},w)$ can be found in linear time \cite{BGKRTW00,AHLT99,FILOS12}. The number of iterations of the for-loop in lines $8$--$10$ is $n-s_{tap}$. Line $9$ takes $O(n+m)$ time \cite{T74,JS13}. By [\cite{SR06}, Theorem $2$], Raghavan's algorithm \cite{SR06} is able to test whether or not $G\setminus\left\lbrace u \right\rbrace $ is twinless strongly connected in linear time.	
\end{proof}

\section{Approximation Algorithms for MTSCSS Problem}
In this section we present approximation algorithms for the MTSCSS problem. Let $G=(V,E)$ be a strongly connected graph. It is well known that a strongly connected subgraph $G_{sc}=(V,E_{sc})$ of $G$ with $|E_{sc}|\leq 2n-2$ can be obtained from $G$ by finding an outgoing branching and an incoming branching rooted at the same vertex in $G$ \cite{FJ81,KRY94}. If $G$ is twinless strongly connected, then the subgraph $G_{sc}$ is strongly connected but is not necessarily twinless strongly connected.

The following lemma shows how to reduce the number of twinless strongly connected components in a subgraph of a twinless strongly connected graph by adding a small number of edges.
\begin{lemma}\label{def:lemmareducenumberoftsccs}
	Let $G=(V,E)$ be a twinless strongly connected graph, and 
	let $G^{L} = (V,E^{L} )$ be a strongly connected subgraph of $G$ such 
	that  $G^{L}$ contains $k>1$ twinless strongly connected components. Let $v,w$ be two vertices such that $v,w$ lie in different twinless strongly connected components of 
	$G^{L}$ and $(v,w) \in E\setminus E^{L}$. Then $(V,E^{L} \cup \left\lbrace (v,w) \right\rbrace $ has less than $k$ twinless strongly connected components.
\end{lemma}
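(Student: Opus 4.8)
The plan is to push everything into the underlying undirected graph, where Raghavan's theorem makes the situation transparent. For a digraph $H$ write $U(H)$ for its underlying \emph{simple} undirected graph (a pair of antiparallel arcs of $H$ collapses to one edge of $U(H)$). By [\cite{SR06}, Theorem~2] a strongly connected digraph $H$ is twinless strongly connected if and only if $U(H)$ is $2$-edge-connected. Applied to induced subgraphs — together with the observation that, for a strongly connected digraph, the subgraph induced on a $2$-edge-connected component of the underlying graph is again strongly connected (an excursion that leaves such a component can re-enter it only through the same bridge, hence can be shortcut) — this upgrades to the statement I will actually use: the twinless strongly connected components of a strongly connected digraph $H$ are exactly the $2$-edge-connected components of $U(H)$.

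With this in hand, set $G' = (V, E^{L} \cup \{(v,w)\})$. Since $G^{L}$ is strongly connected and $G'$ contains every arc of $G^{L}$, the digraph $G'$ is strongly connected too, so the correspondence above applies to both $G^{L}$ and $G'$. Hence it suffices to prove that $U(G')$ has strictly fewer $2$-edge-connected components than $U(G^{L})$, which has $k$ of them.

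The one step I expect to be the real obstacle — and the place where the hypotheses ``$G^{L}$ strongly connected'' and ``$v,w$ in different components'' are essential — is to check that adding the arc $(v,w)$ actually introduces a \emph{new} edge in the underlying graph, i.e.\ that $\{v,w\}$ is not already an edge of $U(G^{L})$. Suppose it were. Because $v$ and $w$ lie in different $2$-edge-connected components of $U(G^{L})$, the edge $\{v,w\}$ is a bridge of $U(G^{L})$; let $V_{v}\ni v$ and $V_{w}\ni w$ be the two connected components of $U(G^{L})$ after this edge is deleted, so $\{V_{v},V_{w}\}$ partitions $V$. Any arc of $G^{L}$ with one endpoint in $V_{v}$ and the other in $V_{w}$ must be $(v,w)$ or $(w,v)$, since in $U(G^{L})$ it would have to be the unique crossing edge $\{v,w\}$; but $(v,w)\notin E^{L}$, so no arc of $G^{L}$ goes from $V_{v}$ to $V_{w}$, and therefore there is no directed path from $v$ to $w$ in $G^{L}$ — contradicting strong connectivity. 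Hence $\{v,w\}\notin U(G^{L})$ and $U(G')$ is obtained from $U(G^{L})$ by inserting the single new edge $\{v,w\}$.

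It then remains to run the standard bridge-tree argument. As $G^{L}$ is strongly connected, $U(G^{L})$ is connected, so contracting its $2$-edge-connected components gives the bridge tree $T$ on $k$ nodes; let $C_{v}$ and $C_{w}$ be the nodes containing $v$ and $w$, which are distinct by hypothesis. Inserting $\{v,w\}$ joins $C_{v}$ and $C_{w}$ by a path that already exists in $T$, say $C_{v}=D_{0},D_{1},\dots,D_{t}=C_{w}$ with $t\ge 1$; the fundamental cycle thereby created shows that the bridges $D_{0}D_{1},\dots,D_{t-1}D_{t}$ are no longer bridges, that $D_{0}\cup\cdots\cup D_{t}$ coalesces into a single $2$-edge-connected component of $U(G')$, and that every other $2$-edge-connected component of $U(G^{L})$ is untouched. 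Thus $U(G')$ has $k-t\le k-1<k$ two-edge-connected components, and by the correspondence above $G'$ has fewer than $k$ twinless strongly connected components, as required.
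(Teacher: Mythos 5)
Your proof is correct, and it uses the same underlying structure as the paper — the bridge tree of the underlying undirected graph, which by Raghavan's results coincides with the TSCC-component tree — but your route leans on Raghavan differently and is in one place noticeably more careful. The paper invokes Raghavan's Theorem~1 (the TSCC component graph projects onto a tree whose edges come from antiparallel pairs) and from it asserts the existence of a simple $w$-to-$v$ path in $G^{L}$ avoiding both $(v,w)$ and $(w,v)$; that assertion tacitly requires $(w,v)\notin E^{L}$, which the paper never argues. You instead reduce everything to the undirected side via Raghavan's Theorem~2, derive the TSCC/2-edge-connected-component correspondence as a corollary, and then \emph{explicitly} establish the crucial sub-step that $\{v,w\}$ is a genuinely new edge of $U(G^{L})$: your observation that $(w,v)\in E^{L}$ would make $\{v,w\}$ a bridge with no $V_{v}\to V_{w}$ arc, contradicting strong connectivity of $G^{L}$, is exactly the point that carries the load and is the one the paper leaves implicit. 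Your bridge-tree accounting also gives the sharper quantitative conclusion $k-t$ with $t\ge 1$, whereas the paper only argues that $C_{1}$ and $C_{2}$ merge (via Raghavan's Lemma~1), which suffices for ``fewer than $k$'' but is silent on why no other component can split. In short: same decomposition, essentially the same idea, but you make explicit a step the paper glosses over.
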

\begin{proof}
	Since the subgraph $G^{L}$ has at least two twinless strongly connected components, there are two distinct twinless strongly connected components $C_{1},C_{2}$ of $G^{L}$ such that $v \in C_{1} ,w \in C_{2}$ and $(v,w) \in E\setminus E^{L}$ (see Figure \ref{fig:exampleaddingedgetosubgraph}	). Let $G^{L}_{tscc}=(V^{L}_{tscc}.E^{L}_{tscc})$ be the TSCC component graph of the subgraph $G^{L}$. By [\cite{SR06}, Theorem $1$], the underlying graph of $G^{L}_{tscc}$ is a tree and each edge in this tree corresponds to antiparallel edges in $G^{L}$. Therefore, there is a simple path $p$ from $w$ to $v$ in $G^{L}$ such that neither $(v,w)$ nor $(w,v)$ belongs to $p$. The vertices $v$ and $w$ lie in the same twinless strongly connected component in the subgraph $(V,E^{L} \cup \left\lbrace (v,w) \right\rbrace $ because the edge $(w,v)$ does not lie on $p$. By [\cite{SR06}, Lemma $1$], the vertices of $C_{1} \cup C_{2}$ lie in the same twinless strongly connected component of the subgraph $(V,E^{L} \cup \left\lbrace (v,w) \right\rbrace $.
\end{proof}
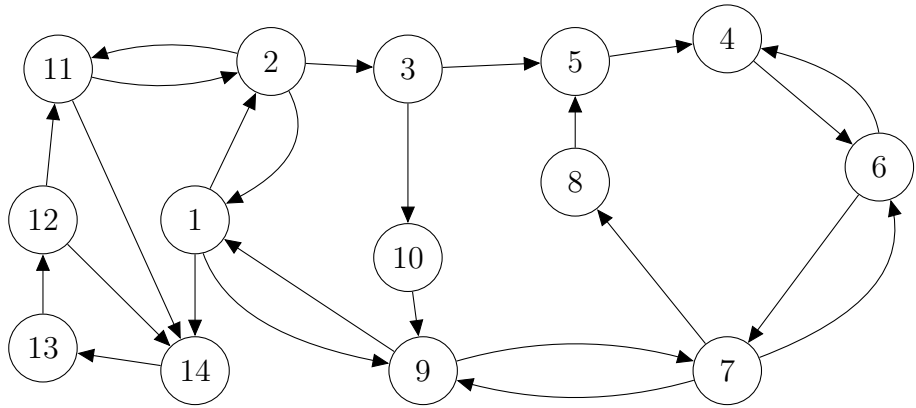
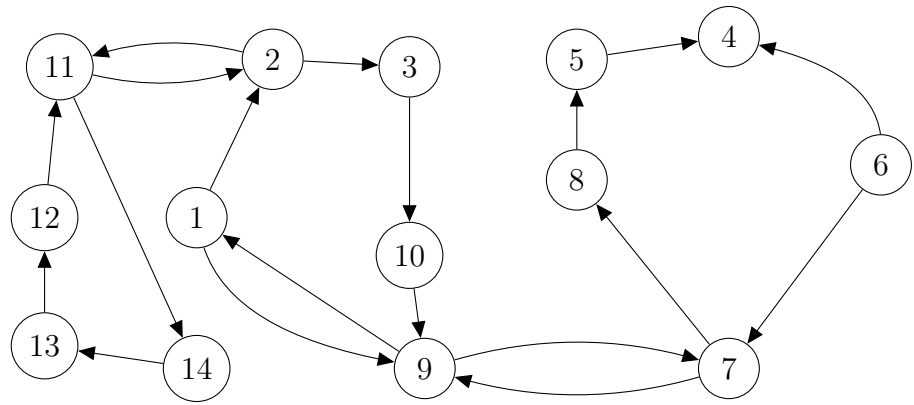
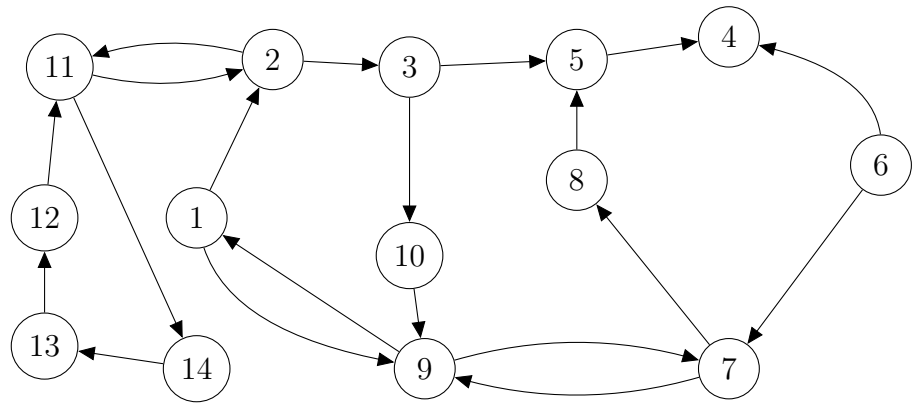
\begin{figure}[htbp]
	\centering
	\subfigure[]{
		\begin{tikzpicture}[xscale=2]
		\tikzstyle{every node}=[color=black,draw,circle,minimum size=0.9cm]
		\node (v1) at (-1.5,0) {$1$};
		\node (v2) at (-1,2.1) {$2$};
		\node (v3) at (-0.1, 2){$3$};
		\node (v4) at (2,2.4) {$4$};
		\node (v5) at (1,2.1) {$5$};
		\node (v6) at (3,0.7) {$6$};
		\node (v7) at (2,-2) {$7$};
		\node (v8) at (1,0.5) {$8$};
		\node (v9) at (0,-2) {$9$};
		\node (v10) at (-0.1,-0.5) {$10$};
		\node (v11) at (-2.4,2) {$11$};
		\node (v12) at (-2.5,0) {$12$};
		\node (v13) at (-2.5,-1.7) {$13$};
		\node (v14) at (-1.5,-2) {$14$};
		
		\begin{scope}   
		\tikzstyle{every node}=[auto=right]   
		\draw [-triangle 45] (v1) to (v2);
		\draw [-triangle 45] (v2) to (v3);
		\draw [-triangle 45] (v3) to (v10);
		\draw [-triangle 45] (v10) to (v9);
		
		\draw [-triangle 45] (v9) to (v1);
		\draw [-triangle 45] (v5) to (v4);
		\draw [-triangle 45] (v6) to (v7);
		\draw [-triangle 45] (v7) to (v8);
		\draw [-triangle 45] (v7) to [bend left](v9);
		\draw [-triangle 45] (v9) to[bend left ] (v7);
		\draw [-triangle 45] (v3) to (v5);
		\draw [-triangle 45] (v8) to (v5);
		\draw [-triangle 45] (v4) to (v6);
		\draw [-triangle 45] (v1) to [bend right](v9);
		\draw [-triangle 45] (v6) to [bend right](v4);
		\draw [-triangle 45] (v7) to [bend right](v6);
		\draw [-triangle 45] (v1) to (v14);
		\draw [-triangle 45] (v14) to (v13);
		\draw [-triangle 45] (v13) to (v12);
		\draw [-triangle 45] (v12) to (v11);
		\draw [-triangle 45] (v11) to[bend right ] (v2);
		\draw [-triangle 45] (v2) to[bend left ] (v1);
		\draw [-triangle 45] (v2) to[bend right ](v11);
		\draw [-triangle 45] (v11) to (v14);
		\draw [-triangle 45] (v12) to (v14);
		\end{scope}
		
		\end{tikzpicture}}
	\subfigure[]{
		\begin{tikzpicture}[xscale=2]
		\tikzstyle{every node}=[color=black,draw,circle,minimum size=0.8cm]
	\node (v1) at (-1.5,0) {$1$};
	\node (v2) at (-1,2.1) {$2$};
	\node (v3) at (-0.1, 2){$3$};
	\node (v4) at (2,2.4) {$4$};
	\node (v5) at (1,2.1) {$5$};
	\node (v6) at (3,0.7) {$6$};
	\node (v7) at (2,-2) {$7$};
	\node (v8) at (1,0.5) {$8$};
	\node (v9) at (0,-2) {$9$};
	\node (v10) at (-0.1,-0.5) {$10$};
	\node (v11) at (-2.4,2) {$11$};
	\node (v12) at (-2.5,0) {$12$};
	\node (v13) at (-2.5,-1.7) {$13$};
	\node (v14) at (-1.5,-2) {$14$};
	
	\begin{scope}   
	\tikzstyle{every node}=[auto=right]   
	\draw [-triangle 45] (v1) to (v2);
	\draw [-triangle 45] (v2) to (v3);
	\draw [-triangle 45] (v3) to (v10);
	\draw [-triangle 45] (v10) to (v9);
	
	\draw [-triangle 45] (v9) to (v1);
	\draw [-triangle 45] (v5) to (v4);
	\draw [-triangle 45] (v6) to (v7);
	\draw [-triangle 45] (v7) to (v8);
	\draw [-triangle 45] (v7) to [bend left](v9);
	\draw [-triangle 45] (v9) to[bend left ] (v7);
	\draw [-triangle 45] (v8) to (v5);
	\draw [-triangle 45] (v1) to [bend right](v9);
	\draw [-triangle 45] (v6) to [bend right](v4);
	\draw [-triangle 45] (v14) to (v13);
	\draw [-triangle 45] (v13) to (v12);
	\draw [-triangle 45] (v12) to (v11);
	\draw [-triangle 45] (v11) to[bend right ] (v2);
	\draw [-triangle 45] (v2) to[bend right ](v11);
	\draw [-triangle 45] (v11) to (v14);
		\end{scope}
		\end{tikzpicture}}

	\subfigure[]{
	\begin{tikzpicture}[xscale=2]
	\tikzstyle{every node}=[color=black,draw,circle,minimum size=0.8cm]
	\node (v1) at (-1.5,0) {$1$};
	\node (v2) at (-1,2.1) {$2$};
	\node (v3) at (-0.1, 2){$3$};
	\node (v4) at (2,2.4) {$4$};
	\node (v5) at (1,2.1) {$5$};
	\node (v6) at (3,0.7) {$6$};
	\node (v7) at (2,-2) {$7$};
	\node (v8) at (1,0.5) {$8$};
	\node (v9) at (0,-2) {$9$};
	\node (v10) at (-0.1,-0.5) {$10$};
	\node (v11) at (-2.4,2) {$11$};
	\node (v12) at (-2.5,0) {$12$};
	\node (v13) at (-2.5,-1.7) {$13$};
	\node (v14) at (-1.5,-2) {$14$};
	
	\begin{scope}   
	\tikzstyle{every node}=[auto=right]   
	\draw [-triangle 45] (v1) to (v2);
	\draw [-triangle 45] (v2) to (v3);
	\draw [-triangle 45] (v3) to (v10);
	\draw [-triangle 45] (v10) to (v9);
	
	\draw [-triangle 45] (v9) to (v1);
	\draw [-triangle 45] (v5) to (v4);
	\draw [-triangle 45] (v6) to (v7);
	\draw [-triangle 45] (v7) to (v8);
	\draw [-triangle 45] (v7) to [bend left](v9);
	\draw [-triangle 45] (v9) to[bend left ] (v7);
		\draw [-triangle 45] (v3) to (v5);
	\draw [-triangle 45] (v8) to (v5);
	\draw [-triangle 45] (v1) to [bend right](v9);
	\draw [-triangle 45] (v6) to [bend right](v4);
	\draw [-triangle 45] (v14) to (v13);
	\draw [-triangle 45] (v13) to (v12);
	\draw [-triangle 45] (v12) to (v11);
	\draw [-triangle 45] (v11) to[bend right ] (v2);
	\draw [-triangle 45] (v2) to[bend right ](v11);
	\draw [-triangle 45] (v11) to (v14);
	\end{scope}
	\end{tikzpicture}}
	\caption{(a) A twinless strongly connected graph $G=(V,E)$. (b) a strongly connected subgraph $G^{L}$ contains $3$ twinless strongly connected components. (c) $(V,E^{L}\cup \left\lbrace (3,5)\right\rbrace )$ has only two twinless strongly connected components}
\label{fig:exampleaddingedgetosubgraph}	
\end{figure}

 \begin{figure}[htbp]
 	\begin{myalgorithm}\label{algo:approximationMTSCSS}\rm\quad\\[-5ex]
 		\begin{tabbing}
 			\quad\quad\=\quad\=\quad\=\quad\=\quad\=\quad\=\quad\=\quad\=\quad\=\kill
 			\textbf{Input:} A twinless strongly connected graph $G=(V,E)$.\\
 			\textbf{Output:} a twinless strongly connected subgraph $G_{t}=(V,E_{t})$\\
 			{\small 1}\> $E_{t} \leftarrow \emptyset$\\
 			{\small 2}\> choose a vertex $w \in V $\\
 			{\small 3}\> construct a spanning tree $T$ rooted at $w$ in $G$\\
 			{\small 4}\>\textbf{for} each edge $e$ in $T$ \textbf{do} \\
 			{\small 5}\>\>$E_{t} \leftarrow E_{t}\cup \left\lbrace e\right\rbrace $ \\
 			{\small 6}\> construct a spanning tree $T_{r}$ rooted at $w$ in $G^{r}=(V,E^{r})$\\
 			{\small 7}\>\textbf{for} each edge $(i,j)$ in $T$ \textbf{do} \\
 			{\small 8}\>\>$E_{t} \leftarrow E_{t}\cup \left\lbrace (j,i)\right\rbrace $ \\
 			{\small 9}\> \textbf{while} $G_{t}=(V,E_{t})$ is not twinless strongly connected \textbf{do}\\ 
 			{\small 10}\>\> find the twinless strongly connected components of $G_{t}$\\
 			{\small 11}\>\> find an edge $(v,w) \in E\setminus E_{t}$ such that $v,w $ are in distinct\\
 			 {\small 12}\>\>\>twinless strongly connected components of $G_{t}$\\
 			{\small 13}\>\>  $E_{t}\leftarrow E_{t} \cup \left\lbrace (v,w) \right\rbrace  $.\\
 		\end{tabbing}
 	\end{myalgorithm}
 \end{figure}

\begin{Theorem}
	Algorithm \ref{algo:approximationMTSCSS} has an approximation factor of $3$. 
\end{Theorem}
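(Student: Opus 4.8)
The plan is to prove the stronger statement $|E_t|\le 3(n-1)$ and to combine it with the elementary fact that every twinless strongly connected spanning subgraph of $G$ has at least $n$ edges; since $3(n-1)<3n$ this yields the factor $3$ (indeed a ratio $3(n-1)/n<3$). Throughout write $n=|V|$ and assume $n\ge 2$, the case $n=1$ being trivial.

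First I would account for the edges added in lines 1--8. The spanning tree $T$ rooted at $w$ is an out-branching and contributes $n-1$ edges; likewise $T_r$ is an out-branching rooted at $w$ in $G^r$, so its reversed edges form an in-branching rooted at $w$ in $G$ and contribute $n-1$ further edges. Hence after line 8 we have $|E_t|\le 2(n-1)$, and $G_t=(V,E_t)$ is already strongly connected: every vertex is reachable from $w$ through $T$, and $w$ is reachable from every vertex through the reversed edges of $T_r$. Let $k_0$ denote the number of twinless strongly connected components of this initial $G_t$; since those components partition $V$, we have $k_0\le n$.

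Next I would bound the number of iterations of the loop in lines 9--13. Adding edges preserves strong connectivity, so at the start of each iteration $G_t$ is strongly connected, and while it is not yet twinless strongly connected it has at least two twinless strongly connected components; I claim the edge sought in lines 11--12 always exists. Otherwise every edge of $E$ joining two distinct twinless strongly connected components of $G_t$ would already lie in $E_t$; taking a leaf $C$ of the TSCC-tree of $G_t$, the only $G_t$-edges between $C$ and $V\setminus C$ — hence, by this supposition, the only $G$-edges between $C$ and $V\setminus C$ — would form a single antiparallel pair (by the same maximality argument used in the proof of Lemma~\ref{def:lemmareducenumberoftsccs}), so $C$ would be separated from the rest of $G$ by a single edge of the underlying undirected graph, contradicting that this graph is $2$-edge-connected, which holds because $G$ is twinless strongly connected \cite{SR06}. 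Therefore each iteration is well defined, adds exactly one edge, and by Lemma~\ref{def:lemmareducenumberoftsccs} strictly decreases the number of twinless strongly connected components of $G_t$; the loop thus terminates after at most $k_0-1\le n-1$ iterations with $G_t$ twinless strongly connected.

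Combining the two bounds, $|E_t|\le 2(n-1)+(n-1)=3(n-1)$. Since any twinless strongly connected spanning subgraph is in particular strongly connected and hence has at least $n$ edges, we conclude $|E_t|\le 3(n-1)<3n\le 3\,\mathrm{OPT}$, which establishes the claimed approximation factor. I expect the main obstacle to be the middle step — verifying that line 11 never fails, equivalently that the loop really terminates; once that is in hand, the edge count and the lower bound $\mathrm{OPT}\ge n$ are routine.
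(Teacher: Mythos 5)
Your proof follows the same structure as the paper's: at most $2n-2$ edges from the out-branching and the (reversed) in-branching, at most $n-1$ further edges contributed by the while loop via Lemma~\ref{def:lemmareducenumberoftsccs}, and the lower bound $\mathrm{OPT}\ge n$, giving the factor $3$. Your one genuine addition is the explicit argument that line $11$ always finds a suitable edge (via a leaf of the TSCC tree and $2$-edge-connectivity of the underlying graph of $G$); the paper leaves this implicit by deferring correctness to Lemma~\ref{def:lemmareducenumberoftsccs}, even though that lemma only treats the case in which such an edge is already given, so your remark fills a small but real gap and is correct.
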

\begin{proof}
	Clearly, each optimal solution for the MTSCSS problem has at least $n$ edge. Moreover, $T \cup T_{r}$ contains at most $2n-2$ edges. In each iteration of the while loop (lines $9$--$13$), we add only an edge to $E_{t}$. By Lemma \ref{def:lemmareducenumberoftsccs}, $E_{t}$ contains at most $3n-3$ edges.  
\end{proof}
The correctness of Algorithm \ref{algo:approximationMTSCSS} follows from Lemma \ref{def:lemmareducenumberoftsccs}.

\begin{Theorem}\label{def:runningtimeapproximationMTSCSS}
	The running time of Algorithm \ref{algo:approximationMTSCSS} is $O(nm)$.
\end{Theorem}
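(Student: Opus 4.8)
The plan is to bound the cost of Algorithm~\ref{algo:approximationMTSCSS} phase by phase and to observe that the dominant term is the while loop in lines~9--13. The two ingredients I will rely on are Lemma~\ref{def:lemmareducenumberoftsccs} (which controls how many times the loop can run) and the linear-time routines of Raghavan~\cite{SR06} for computing twinless strongly connected components and for testing twinless strong connectivity.

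First I would handle the preprocessing in lines~1--8. Picking $w$, building a spanning tree $T$ of $G$ rooted at $w$, forming the edge set $E^{r}$, and building a spanning tree $T_{r}$ of $G^{r}=(V,E^{r})$ rooted at $w$ can each be done with a single breadth-first or depth-first search, so this costs $O(n+m)$; inserting the at most $2n-2$ corresponding edges into $E_{t}$ costs $O(n)$. Hence lines~1--8 run in $O(n+m)$ time and produce an initial subgraph $G_{t}=(V,E_{t})$ that is strongly connected (an out-branching together with an in-branching rooted at $w$) and has $|E_{t}|\le 2n-2$.

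Next I would analyse the while loop. The key point is that it executes at most $n-1$ times: since $G_{t}$ is strongly connected it has at most $n$ twinless strongly connected components, and by Lemma~\ref{def:lemmareducenumberoftsccs} every pass through lines~10--13 adds one edge joining two distinct twinless strongly connected components of $G_{t}$ and therefore strictly decreases their number, the loop terminating once this number drops to $1$. Inside one iteration, line~10 computes the twinless strongly connected components of $G_{t}$ in $O(n+m)$ time by Raghavan's algorithm~\cite{SR06}; lines~11--12 find a suitable edge of $E\setminus E_{t}$ by scanning the at most $m$ edges of $E$ against the component labels, which is $O(m)$; line~13 is $O(1)$; and evaluating the loop condition is another linear-time twinless strong connectivity test~\cite{SR06}. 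Thus each iteration costs $O(n+m)$ and the loop costs $O\bigl((n-1)(n+m)\bigr)$.

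Summing the phases gives $O(n+m)+O\bigl((n-1)(n+m)\bigr)=O\bigl(n(n+m)\bigr)$, and since $G$ is strongly connected we have $m\ge n$, so $n(n+m)=O(nm)$, which yields the claimed bound. The only real subtlety is the iteration count of the while loop: it rests entirely on the monotone decrease of the number of twinless strongly connected components guaranteed by Lemma~\ref{def:lemmareducenumberoftsccs}, together with the observation that the starting subgraph $G_{t}$ is strongly connected so that this number never exceeds $n$; everything else is a routine accounting of linear-time subroutines.
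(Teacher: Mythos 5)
Your proposal is correct and follows essentially the same route as the paper: bound the preprocessing in lines~1--8 by $O(n+m)$, bound the number of while-loop iterations by $n-1$ via Lemma~\ref{def:lemmareducenumberoftsccs}, and charge $O(n+m)$ per iteration using Raghavan's linear-time computation of twinless strongly connected components. You simply spell out a few bookkeeping details (the $m\ge n$ observation, the $O(m)$ edge scan in lines~11--12) that the paper leaves implicit.
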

\begin{proof}
A spanning tree of a twinless strongly connected graph can found in linear time using DFS. By Lemma \ref{def:lemmareducenumberoftsccs}, the number of iterations of the while loop is at most $n-1$. Furthermore, the twinless strongly connected components of $G_{t}$ can be calculated in linear time using Raghavan's algorithm \cite{SR06}.
\end{proof}

Let $G^{L}=(V,E^{L})$ be a strongly connected subgraph of a twinless strongly connected graph $G=(V,E)$ such that $G^{L}$ is not twinless strongly connected.
Let $v,w$ be two distinct vertices in $G^{L}$ such that $(v,w) \in E\setminus E^{L} $ and $v,w$  don't lie in the same twinless strongly connected component of $G^{L}$.  There is a simple path $p$ from $w$ to $v$ in $G^{L}$ because the subgraph $G^{L}$ is strongly connected. We use $S^{wv}_{p}$ to denote the set of edges $\{ (i,j) \mid  (j,i) \in E^{L} $ belongs to the path $p$ and there exists an edge in the underlying graph of TSCC component graph of  $G^{L}$ which corresponds to $(i,j)$ in $ G^{L}\} $. 

The following lemma leads to an approximation algorithm (Algorithm \ref{algo:refinedapproximationMTSCSS}) for the MTSCSS problem.

\begin{lemma}\label{def:approximationalgorithmfortscss }
	Let $G=(V,E)$ be a twinless strongly connected graph, and let  $G^{L}=(V,E^{L})$ be a strongly connected subgraph of $G$ such that $G^{L}$  has $k$ twinless strongly connected components and $v,w \in V$ don't lie in the same twinless strongly connected component of $G^{L}$. Then $|S^{wv}_{p}|>0$ and $(V,(E^{L} \cup \left\lbrace (v,w)\right\rbrace)\setminus S^{wv}_{p})$ contains at most $k-1$ twinless strongly connected components, where $p$ is a simple path from $w$ to $v$ in $G^{L}$.  
\end{lemma}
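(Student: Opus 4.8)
The plan is to adapt the argument of Lemma~\ref{def:lemmareducenumberoftsccs} to the stronger conclusion, the extra content being that removing the edge set $S^{wv}_{p}$ from $E^{L} \cup \{(v,w)\}$ does not disconnect anything that was newly merged. First I would recall from [\cite{SR06}, Theorem~1] that the underlying graph of the TSCC component graph $G^{L}_{tscc}$ is a tree, each of whose edges corresponds to a pair of antiparallel edges in $G^{L}$. Since $v$ and $w$ lie in distinct twinless strongly connected components $C_{1}$ and $C_{2}$ of $G^{L}$, the tree path in $G^{L}_{tscc}$ between the supervertices of $C_{1}$ and $C_{2}$ is nonempty, so it uses at least one tree edge; the corresponding antiparallel pair $(a,b),(b,a)$ in $G^{L}$ has one of its two orientations lying on any $w$-to-$v$ simple path $p$ that respects the tree structure (more precisely, $p$ can be taken so that for each supervertex edge it crosses, exactly one of the two antiparallel arcs is used). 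Hence $S^{wv}_{p}$, which collects the reverse arcs $(i,j)$ of those arcs $(j,i)$ of $p$ that correspond to tree edges of $G^{L}_{tscc}$, is nonempty, giving $|S^{wv}_{p}|>0$.

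Next I would argue that in $(V, (E^{L}\cup\{(v,w)\})\setminus S^{wv}_{p})$ the components $C_{1}$ and $C_{2}$ (together with every twinless strongly connected component corresponding to a supervertex on the tree path between them) get merged into one twinless strongly connected component. The key observation is that the path $p$ goes from $w$ to $v$ using only arcs of $E^{L}$, and none of the arcs on $p$ is removed — we only remove the \emph{reverse} arcs $S^{wv}_{p}$. Together with the newly added arc $(v,w)$, whose reverse $(w,v)$ is not on $p$ (it cannot be, since $(w,v)$, if present in $E^L$, would be an arc internal to the structure whereas $p$ crosses tree edges; this is exactly the point used in Lemma~\ref{def:lemmareducenumberoftsccs}), we obtain a closed walk through $w$ and $v$ that uses no pair of antiparallel edges. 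By [\cite{SR06}, Lemma~1] this forces all vertices of $C_{1}\cup C_{2}$, and indeed of every component met by $p$, to lie in a common twinless strongly connected component of the new subgraph. Since at least two distinct old components are fused and no component is ever split (the vertex set is unchanged and removing $S^{wv}_{p}$ cannot destroy twinless strong connectivity \emph{within} a component, as those edges are the inter-component antiparallel arcs, not intra-component ones — this needs a short check), the number of twinless strongly connected components drops by at least one, i.e.\ to at most $k-1$.

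The main obstacle I anticipate is the claim just flagged: that deleting $S^{wv}_{p}$ does not reduce twinless strong connectivity anywhere it should not. One must verify that each arc put into $S^{wv}_{p}$ is genuinely an ``inter-component'' arc of $G^{L}$ — i.e.\ it corresponds to an edge of the tree $G^{L}_{tscc}$ — so that its removal only affects the tree structure of $G^{L}_{tscc}$ and the merge performed by $(v,w)$ compensates exactly for the tree edges cut along $p$. The definition of $S^{wv}_{p}$ is written precisely to guarantee this, but the write-up should make explicit that after removing $S^{wv}_{p}$ the would-be component graph is the tree $G^{L}_{tscc}$ with the path-edges contracted by $(v,w)$, hence still yields at most $k-1$ components. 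I would also double-check the degenerate case where $p$ passes through $C_{1}$ or $C_{2}$ more than once, ensuring the chosen $p$ is simple so that no arc is counted with a conflicting reverse already forced onto $p$.
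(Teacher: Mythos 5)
Your proof follows essentially the same route as the paper's: invoke Raghavan's tree-structure theorem for the TSCC component graph, observe that the tree path between the supervertices of $v$ and $w$ has length at least one to get $|S^{wv}_{p}|>0$, and use the cycle formed by $p$ together with $(v,w)$ --- which contains no antiparallel pair --- to argue via Raghavan's Lemma~1 that all components along that tree path merge into one. The caveats you flag (that every arc in $S^{wv}_{p}$ is genuinely an inter-component arc, that $(w,v)$ cannot lie on $p$ since $(v,w)\notin E^{L}$, and that a simple path cannot revisit a supervertex) are real but are left implicit in the paper's proof as well, so there is no substantive divergence in approach.
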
 
\begin{proof}
 Assume that $C_{1},C_{2}$ are two distinct twinless strongly connected components of $G^{L}$ of such that $v\in C_{1}$ and $w\in C_{2}$. By [\cite{SR06}, Theorem $1$], the underlying graph of the TSCC component graph of $G^{L}$ is a tree $T^{L}$ and each edge $(u_{i},u_{j})$ in $T^{L}$ corresponds to a pair of antiparallel edges in $G^{L}$. Suppose that $p_{tscc}=(v_{1},v_{2},\dots,v_{r})$ is a path from $v_{1}$ to $v_{r}$ in $T^{L}$, where $v_{1}$ corresponds to the component $C_{1}$ and $v_{r}$ corresponds to the component $C_{2}$. Each vertex of $p_{tscc}$ corresponds to a twinless strongly connected component of $G^{L}$. Let $(x,y)$ be an edge on $p$ such that $(x,y)$ corresponds to an edge $(v_{x},v_{y})$ on $p_{tscc}$. Then $(y,x) \in S^{wv}_{p}$. The path $p_{tscc}$ has length at least $1$. Therefore, $|S^{wv}_{p}|>0$. Since the path $p$ and the edge $(v,w)$ form a cycle in $G^{L}$, all the twinless strongly cnnected components that correspond to the vertices of $p_{tscc}$ are merged into a single component in $(V,(E^{L} \cup \left\lbrace (v,w)\right\rbrace)\setminus S^{wv}_{p})$. Thus, the subgraph $(V,(E^{L} \cup \left\lbrace (v,w)\right\rbrace)\setminus S^{wv}_{p})$ has at most $k-1$ twinless strongly connected components.
	
\end{proof}

\begin{figure}[h]
	\begin{myalgorithm}\label{algo:refinedapproximationMTSCSS}\rm\quad\\[-5ex]
		\begin{tabbing}
			\quad\quad\=\quad\=\quad\=\quad\=\quad\=\quad\=\quad\=\quad\=\quad\=\kill
			\textbf{Input:} A twinless strongly connected graph $G=(V,E)$.\\
			\textbf{Output:} a twinless strongly connected subgraph $G_{t}=(V,E_{t})$\\
			{\small 1}\> $E_{t} \leftarrow \emptyset$\\
			{\small 2}\> choose a vertex $w \in V $\\
			{\small 3}\> construct a spanning tree $T$ rooted at $w$ in $G$\\
			{\small 4}\>\textbf{for} each edge $e$ in $T$ \textbf{do} \\
			{\small 5}\>\>$E_{t} \leftarrow E_{t}\cup \left\lbrace e\right\rbrace $ \\
			{\small 6}\> construct a spanning tree $T_{r}$ rooted at $w$ in $G^{r}=(V,E^{r})$\\
			{\small 7}\>\textbf{for} each edge $(i,j)$ in $T$ \textbf{do} \\
			{\small 8}\>\>$E_{t} \leftarrow E_{t}\cup \left\lbrace (j,i)\right\rbrace $ \\
			{\small 9}\> \textbf{while} $G_{t}=(V,E_{t})$ is not twinless strongly connected \textbf{do}\\ 
			{\small 10}\>\> calculate the twinless strongly connected components of $G_{t}$\\
			{\small 11}\>\> find an edge $(v,w) \in E\setminus E_{t}$ such that $v,w $ are in distinct\\
			{\small 12}\>\>\>twinless strongly connected components of $G_{t}$\\
			{\small 13}\>\> find a a simple path $p$ from $w$ to $v$ in $G_{t}$\\
			{\small 14}\>\>  $E_{t}\leftarrow E_{t} \cup \left\lbrace (v,w) \right\rbrace  $.\\
			{\small 15}\>\>\textbf{for} each edge $(i,j)$ on $p$ \textbf{do}\\
			{\small 16}\>\>\> Let $T^{L}$ be the TSCC component graph of $G_{t}$.\\ 
			{\small 17}\>\>\> \textbf{if} $(i,j)$ corresponds to an edge in $T^{L}$ \textbf{then}\\
			{\small 18}\>\>\>\>$E_{t}\leftarrow E_{t}\setminus \left\lbrace (j,i) \right\rbrace $

		\end{tabbing}
	\end{myalgorithm}
\end{figure}

.
 \begin{lemma}
 	Algorithm \ref{algo:refinedapproximationMTSCSS} has approximation factor of $2$.
 \end{lemma}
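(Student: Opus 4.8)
The plan is to bound the size of the output edge set $E_t$ against the trivial lower bound $n$ for an optimal MTSCSS solution, just as in the proof of the approximation factor $3$ for Algorithm~\ref{algo:approximationMTSCSS}, but now exploiting the edge deletions in lines $15$--$18$ to save roughly one copy of the spanning tree. First I would recall that after lines $1$--$8$ the set $E_t$ equals $T \cup T_r$ where $T$ is a spanning tree rooted at $w$ (with $n-1$ edges) and $T_r$ consists of the reverses of the edges of $T$; so initially $|E_t| \le 2n-2$, and the subgraph $(V,E_t)$ is strongly connected (an out-branching plus an in-branching, both rooted at $w$, after reversal). Note that at this stage every edge of $T$ has its antiparallel mate present, so the TSCC component graph is, by [\cite{SR06}, Theorem~$1$], a tree whose edges all correspond to antiparallel pairs in $E_t$.

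The key accounting step is to track, across all iterations of the while loop, the net change in $|E_t|$. In a single iteration we add one edge $(v,w)$ (line $14$) and then delete, in lines $15$--$18$, every edge $(j,i)$ such that $(i,j)$ lies on the chosen simple path $p$ from $w$ to $v$ and $(i,j)$ corresponds to an edge of the current TSCC component graph $T^L$; by Lemma~\ref{def:approximationalgorithmfortscss } this deleted set is exactly $S^{wv}_p$, it is nonempty, and the number of twinless strongly connected components drops by at least one. Hence each iteration decreases the component count by at least $1$ while increasing $|E_t|$ by $1 - |S^{wv}_p| \le 0$. Since the component count starts below $n$ and each iteration removes at least one component, there are at most $n-1$ iterations, and more importantly $|E_t|$ never exceeds its initial value $2n-2$. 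Therefore $|E_t| \le 2n-2 \le 2n = 2\cdot n \le 2\cdot\mathrm{OPT}$, where $\mathrm{OPT}\ge n$ is the size of an optimal twinless strongly connected spanning subgraph; this gives the approximation factor $2$.

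The step I expect to require the most care is verifying that the edges removed in lines $15$--$18$ are genuinely redundant, i.e.\ that the subgraph remains strongly connected (indeed twinless strongly connected on the merged components) after the deletions, so that the invariant ``$(V,E_t)$ is strongly connected and its TSCC component graph is a tree of antiparallel pairs'' is preserved into the next iteration. This is precisely what Lemma~\ref{def:approximationalgorithmfortscss } asserts about $(V,(E^L \cup \{(v,w)\})\setminus S^{wv}_p)$, applied with $E^L = E_t$ at the start of the iteration; one must also check that removing the reverse edges $(j,i)$ for $(i,j)$ on $p$ does not destroy the path $p$ itself (it does not, since $p$ uses the edges $(i,j)$, not their reverses) nor any other component's internal twinless strong connectivity (it does not, since only edges corresponding to tree edges of $T^L$, i.e.\ inter-component edges, are deleted). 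Granting Lemma~\ref{def:approximationalgorithmfortscss }, the remaining argument is the elementary counting above, so the bound $|E_t|\le 2n-2$ and hence the factor $2$ follow.
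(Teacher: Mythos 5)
Your proposal is correct and follows the same argument as the paper: after lines $1$--$8$ the edge set has at most $2n-2$ edges, and by Lemma~\ref{def:approximationalgorithmfortscss } each iteration of the while loop adds one edge in line $14$ and removes at least one edge in lines $15$--$18$, so $|E_t|\le 2n-2\le 2\cdot\mathrm{OPT}$. The paper's own proof is just this two-line counting argument; your additional verification that the deletions do not destroy the path $p$ or the intra-component connectivity is a useful explicit check of the invariant that the paper leaves to Lemma~\ref{def:approximationalgorithmfortscss }.
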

\begin{proof}
	The number of edges of the subgraph computed in lines $1$--$8$ of Algorithm \ref{algo:refinedapproximationMTSCSS} is at most $ 2n-2$. By Lemma \ref{def:approximationalgorithmfortscss }, in each iteration of the while loop, an edge is added in line $14$ but at least one edge is removed in lines $15$--$18$,  

\end{proof}
 \begin{Theorem}
 	Algorithm \ref{algo:refinedapproximationMTSCSS} runs in $O(nm)$ time.
 \end{Theorem}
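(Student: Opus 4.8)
The plan is to bound the running time of Algorithm~\ref{algo:refinedapproximationMTSCSS} by analyzing the preprocessing in lines~1--8 and then the while loop in lines~9--18 separately. First I would note that lines~1--8 take linear time: by the same reasoning as in Theorem~\ref{def:runningtimeapproximationMTSCSS}, a spanning tree of $G$ (rooted at $w$) and a spanning tree of $G^r=(V,E^r)$ can each be built by a single DFS in $O(n+m)$ time, and copying their $2n-2$ edges into $E_t$ costs $O(n)$.

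Next I would bound the number of iterations of the while loop. By Lemma~\ref{def:approximationalgorithmfortscss }, each iteration that adds $(v,w)$ and removes the edges of $S^{wv}_p$ strictly decreases the number of twinless strongly connected components of $G_t$. Since the initial subgraph from lines~1--8 has at most $n$ twinless strongly connected components and the loop terminates once this count reaches $1$, there are at most $n-1$ iterations. Then I would account for the cost of one iteration: by Raghavan's algorithm \cite{SR06}, the twinless strongly connected components of $G_t$ (line~10) and hence the TSCC component graph $T^L$ can be computed in $O(n+m)$ time; finding an edge $(v,w)\in E\setminus E_t$ crossing two distinct components (lines~11--12) is a single scan over $E$ in $O(n+m)$ time; finding a simple path $p$ from $w$ to $v$ in $G_t$ (line~13) is one DFS/BFS in $O(n+m)$ time; and the for-loop in lines~15--18, which walks along $p$ and checks for each edge $(i,j)$ whether it corresponds to a tree edge of $T^L$ (a lookup that can be prepared in linear time while computing $T^L$), touches at most $|p|\le n-1$ edges and so costs $O(n)$. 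Hence each iteration runs in $O(n+m)$ time, and the whole while loop runs in $O(n(n+m))=O(nm)$ time (using $m\ge n-1$ in a connected graph), giving the claimed $O(nm)$ bound overall.

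The main obstacle is making precise the bookkeeping in lines~15--18: one must argue that the test ``$(i,j)$ corresponds to an edge in $T^L$'' can be answered in $O(1)$ per edge after $O(n+m)$ preprocessing, and that removing the edges $(j,i)$ for such $(i,j)$ is exactly the removal of $S^{wv}_p$ as defined before Lemma~\ref{def:approximationalgorithmfortscss }, so that Lemma~\ref{def:approximationalgorithmfortscss } indeed applies and the component count drops. Once this is in place, the iteration-count bound and the per-iteration cost combine routinely to yield the $O(nm)$ running time.
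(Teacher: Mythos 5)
Your proposal is correct and follows the same basic plan the paper uses (bound the preprocessing, bound the number of while-loop iterations via Lemma~\ref{def:approximationalgorithmfortscss }, then show each iteration costs $O(n+m)$); the paper's own proof is just a one-line remark that lines~14--17 take linear time, implicitly leaning on its earlier analysis of Theorem~\ref{def:runningtimeapproximationMTSCSS}, whereas you spell out all the steps.
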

\begin{proof}
	Lines $14$--$17$ take linear time.
\end{proof}
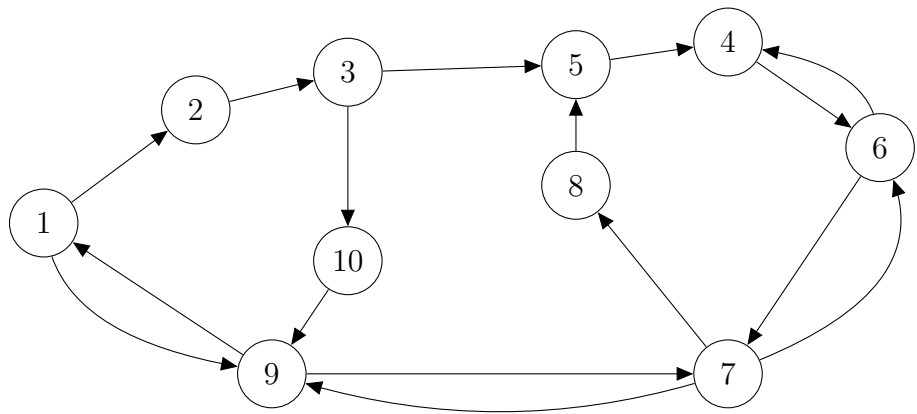
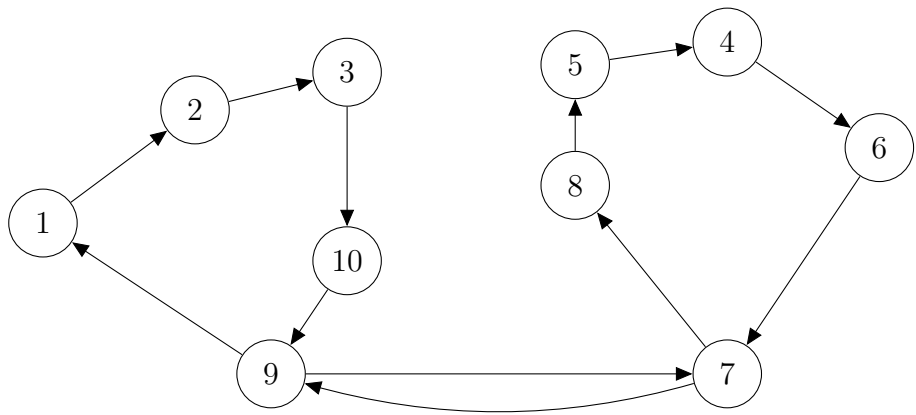
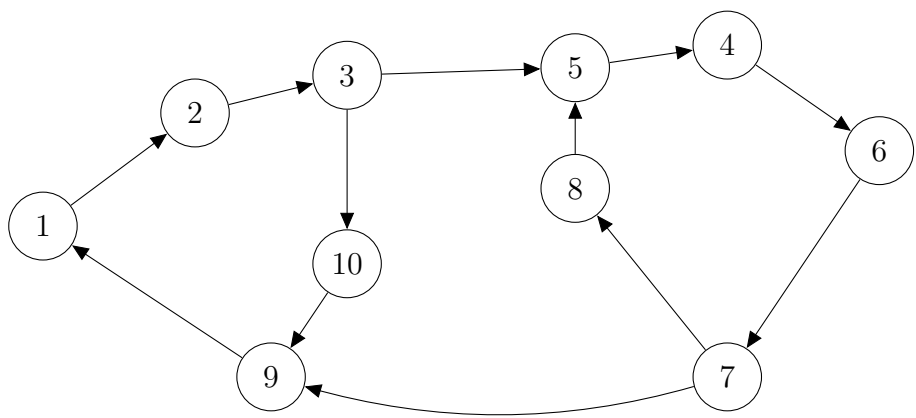
\begin{figure}[htbp]
	\centering
	\subfigure[]{
		\begin{tikzpicture}[xscale=2]
		\tikzstyle{every node}=[color=black,draw,circle,minimum size=0.9cm]
		\node (v1) at (-2.5,0) {$1$};
		\node (v2) at (-1.5,1.5) {$2$};
		\node (v3) at (-0.5, 2){$3$};
		\node (v4) at (2,2.4) {$4$};
		\node (v5) at (1,2.1) {$5$};
		\node (v6) at (3,1) {$6$};
		\node (v7) at (2,-2) {$7$};
		\node (v8) at (1,0.5) {$8$};
		\node (v9) at (-1,-2) {$9$};
		\node (v10) at (-0.5,-0.5) {$10$};
		
		\begin{scope}   
		\tikzstyle{every node}=[auto=right]   
	   	\draw [-triangle 45] (v1) to (v2);
	   \draw [-triangle 45] (v2) to (v3);
	   \draw [-triangle 45] (v3) to (v10);
	   \draw [-triangle 45] (v10) to (v9);
	 
	   \draw [-triangle 45] (v9) to (v1);
	   \draw [-triangle 45] (v5) to (v4);
	   \draw [-triangle 45] (v6) to (v7);
	   \draw [-triangle 45] (v7) to (v8);
	   \draw [-triangle 45] (v7) to [bend left](v9);
	   \draw [-triangle 45] (v9) to (v7);
	   \draw [-triangle 45] (v3) to (v5);
	   \draw [-triangle 45] (v8) to (v5);
	   \draw [-triangle 45] (v4) to (v6);
	   \draw [-triangle 45] (v1) to [bend right](v9);
	   \draw [-triangle 45] (v6) to [bend right](v4);
	   \draw [-triangle 45] (v7) to [bend right](v6);
		\end{scope}
		
		\end{tikzpicture}}
	\subfigure[]{
	\begin{tikzpicture}[xscale=2]
	\tikzstyle{every node}=[color=black,draw,circle,minimum size=0.8cm]
	
	\tikzstyle{every node}=[color=black,draw,circle,minimum size=0.9cm]
\node (v1) at (-2.5,0) {$1$};
\node (v2) at (-1.5,1.5) {$2$};
\node (v3) at (-0.5, 2){$3$};
\node (v4) at (2,2.4) {$4$};
\node (v5) at (1,2.1) {$5$};
\node (v6) at (3,1) {$6$};
\node (v7) at (2,-2) {$7$};
\node (v8) at (1,0.5) {$8$};
\node (v9) at (-1,-2) {$9$};
\node (v10) at (-0.5,-0.5) {$10$};
	
	\begin{scope}   
	\tikzstyle{every node}=[auto=right]   
	\draw [-triangle 45] (v1) to (v2);
	\draw [-triangle 45] (v2) to (v3);
	\draw [-triangle 45] (v3) to (v10);
	\draw [-triangle 45] (v10) to (v9);
	
	\draw [-triangle 45] (v9) to (v1);
	\draw [-triangle 45] (v5) to (v4);
	\draw [-triangle 45] (v6) to (v7);
	\draw [-triangle 45] (v7) to (v8);
	\draw [-triangle 45] (v7) to [bend left](v9);
	\draw [-triangle 45] (v9) to (v7);
	
	\draw [-triangle 45] (v8) to (v5);
	\draw [-triangle 45] (v4) to (v6);
 
	\tikzstyle{every node}=[auto=right] 

	\end{scope}
	\end{tikzpicture}}
	\subfigure[]{
	\begin{tikzpicture}[xscale=2]
	\tikzstyle{every node}=[color=black,draw,circle,minimum size=0.9cm]
	\node (v1) at (-2.5,0) {$1$};
	\node (v2) at (-1.5,1.5) {$2$};
	\node (v3) at (-0.5, 2){$3$};
	\node (v4) at (2,2.4) {$4$};
	\node (v5) at (1,2.1) {$5$};
	\node (v6) at (3,1) {$6$};
	\node (v7) at (2,-2) {$7$};
	\node (v8) at (1,0.5) {$8$};
	\node (v9) at (-1,-2) {$9$};
	\node (v10) at (-0.5,-0.5) {$10$};
	
	\begin{scope}   
	\tikzstyle{every node}=[auto=right]   
	\draw [-triangle 45] (v1) to (v2);
	\draw [-triangle 45] (v2) to (v3);
	\draw [-triangle 45] (v3) to (v10);
	\draw [-triangle 45] (v10) to (v9);
	
	\draw [-triangle 45] (v9) to (v1);
	\draw [-triangle 45] (v5) to (v4);
	\draw [-triangle 45] (v6) to (v7);
	\draw [-triangle 45] (v7) to (v8);
	\draw [-triangle 45] (v7) to [bend left](v9);
	
	\draw [-triangle 45] (v3) to (v5);
	\draw [-triangle 45] (v8) to (v5);

	\draw [-triangle 45] (v4) to (v6);
	\end{scope}
	
	\end{tikzpicture}}
\caption{(a) A twinless strongly connected graph $G=(V,E)$. (b) An optimal solution $E_{s}$ for MSCSS problem with $|E_{s}|=12$. (c) An optimal solution $E_{t}$ for MTSCSS problem with $|E_{t}|=12$.}
\label{fig:exampleoptmtscssmscss}
\end{figure}
\section{Relationship between MTSCSS Problem and MSCSS Problem}
In this section we show a relationship between the MTSCSS problem and the MSCSS problem. 
The following lemma provides a relationship between optimal solutions for the MTSCSS problem and optimal solutions for the MSCSS problem. as explained in Figure \ref{fig:exampleoptmtscssmscss}.
\begin{lemma}\label{def:mscssequalsmtscss}
	 Let $G=(V,E)$ be a twinless strongly connected graph. Let $E_{t}\subseteq E$ be an optimal solution for MTSCSS problem and let $E_{s}\subseteq E$ be an optimal solution for MSCSS problem. Then $|E_{t}|=|E_{s}|$.
\end{lemma}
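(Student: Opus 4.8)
The plan is to prove the two inequalities $|E_t| \le |E_s|$ and $|E_s| \le |E_t|$ separately. One direction is immediate: since an optimal MTSCSS solution $(V,E_t)$ is in particular a strongly connected spanning subgraph of $G$ (a twinless strongly connected graph is strongly connected), and $E_s$ is a \emph{minimum} strongly connected spanning subgraph, we get $|E_s| \le |E_t|$. So the entire content of the lemma is the reverse inequality $|E_t| \le |E_s|$, i.e.\ that every minimum strongly connected spanning subgraph can be converted into a twinless strongly connected spanning subgraph without increasing the edge count.

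First I would take an optimal MSCSS solution $G^{L}=(V,E_s)$ and consider its twinless strongly connected components. If $G^{L}$ is already twinless strongly connected, then $|E_t| \le |E_s|$ trivially (it is a feasible MTSCSS solution). Otherwise, the key is Lemma~\ref{def:approximationalgorithmfortscss }: for any edge $(v,w)\in E\setminus E_s$ joining two distinct TSCCs of $G^{L}$ (such an edge exists because $G$ is twinless strongly connected while $G^{L}$ is not — one can argue that if no such edge existed, then $G$ restricted to the edges available would inherit the tree structure of the TSCC component graph and fail to be twinless strongly connected), the subgraph $(V,(E_s\cup\{(v,w)\})\setminus S^{wv}_p)$ is still strongly connected, has strictly fewer TSCCs, and — crucially — has cardinality $|E_s| + 1 - |S^{wv}_p| \le |E_s|$, since $|S^{wv}_p|\ge 1$. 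Iterating this (as Algorithm~\ref{algo:refinedapproximationMTSCSS} does, but starting from $E_s$ rather than from $T\cup T_r$) terminates after at most $k-1$ steps with a twinless strongly connected spanning subgraph whose edge count is at most $|E_s|$. Hence $|E_t| \le |E_s|$, and combined with the first paragraph, $|E_t| = |E_s|$.

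The main obstacle is verifying the two facts about the iteration that make it go through with no net increase in edges. One is the existence, at every stage, of an edge of $E\setminus E_s^{\text{current}}$ between two distinct TSCCs: this needs the hypothesis that the ambient graph $G$ is twinless strongly connected, and the argument is that the TSCC component graph of the current subgraph has an underlying tree (by [\cite{SR06}, Theorem~1]) whose every tree edge is a pair of antiparallel edges, so if $G$ contributed no new crossing edge the underlying undirected graph of $G$ would itself fail to be $2$-edge-connected, contradicting Raghavan's characterization. The second is that each replacement step does not disconnect the graph and genuinely reduces the number of TSCCs — but this is exactly the statement of Lemma~\ref{def:approximationalgorithmfortscss }, which we may invoke. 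One should also note that the edges removed in $S^{wv}_p$ lie in the \emph{current} edge set and are distinct from the newly added $(v,w)$ (indeed $(w,v)$ cannot be on the simple path $p$ from $w$ to $v$ in the way that would make removal illegal), so the bookkeeping $|E_s|+1-|S^{wv}_p|$ is correct; this is the routine part.
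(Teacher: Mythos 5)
Your proposal is correct and follows essentially the same route as the paper: the direction $|E_s|\le|E_t|$ is immediate because a twinless strongly connected spanning subgraph is strongly connected, and the reverse direction is obtained by applying Lemma~\ref{def:approximationalgorithmfortscss } iteratively (i.e.\ running lines $9$--$18$ of Algorithm~\ref{algo:refinedapproximationMTSCSS}) starting from $E_s$, which produces a twinless strongly connected spanning subgraph with at most $|E_s|$ edges. The paper's proof leaves implicit the two bookkeeping points you spell out (existence at each stage of a crossing edge in $E\setminus E_t^{\text{current}}$, and preservation of strong connectivity after each exchange), which is a minor amplification rather than a different argument.
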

\begin{proof}
	Since each twinless strongly connected spanning subgraph is strongly connected, $|E_{t}|\geq |E_{s}$|.
	
	Now we prove that $|E_{t}| \leq |E_{s}|$. Suppose that the subgraph $(V,E_{s})$ is not twinless strongly connected. We can use lines $9$--$18$ of Algorithm \ref{algo:refinedapproximationMTSCSS}  to transform it to a twinless strongly connected subgraph $G_{L}=(V,E_{L})$. By lemma \ref{def:approximationalgorithmfortscss }, $|E_{L}|\leq |E_{s}|$. Since $G_{L}$ is twinless strongly connected, we have $|E_{L}|\geq |E_{t}|$.

\end{proof}

 \begin{figure}[htbp]
	\begin{myalgorithm}\label{algo:approximationMTSCSSMSS}\rm\quad\\[-5ex]
		\begin{tabbing}
			\quad\quad\=\quad\=\quad\=\quad\=\quad\=\quad\=\quad\=\quad\=\quad\=\kill
			\textbf{Input:} A twinless strongly connected graph $G=(V,E)$.\\
			\textbf{Output:} a twinless strongly connected subgraph $G_{t}=(V,E_{t})$\\
			{\small 1}\> Let $(V,E_{t})$ be a solution for MSCSS problem computed \\
		    {\small 2}\>\>by approximation algorithm A \\
			{\small 3}\> \textbf{while} $G_{t}=(V,E_{t})$ is not twinless strongly connected \textbf{do}\\ 
		{\small 4}\>\> calculate the twinless strongly connected components of $G_{t}$\\
		{\small 5}\>\> find an edge $(v,w) \in E\setminus E_{t}$ such that $v,w $ are in distinct\\
		{\small 6}\>\>\>twinless strongly connected components of $G_{t}$\\
		{\small 7}\>\> find a a simple path $p$ from $w$ to $v$ in $G_{t}$\\
		{\small 8}\>\>  $E_{t}\leftarrow E_{t} \cup \left\lbrace (v,w) \right\rbrace  $.\\
		{\small 9}\>\>\textbf{for} each edge $(i,j)$ on $p$ \textbf{do}\\
		{\small 10}\>\>\> Let $T^{L}$ be the TSCC component graph of $G_{t}$.\\ 
		{\small 11}\>\>\> \textbf{if} $(i,j)$ corresponds to an edge in $T^{L}$ \textbf{then}\\
		{\small 12}\>\>\>\>$E_{t}\leftarrow E_{t}\setminus \left\lbrace (j,i) \right\rbrace $
		\end{tabbing}
	\end{myalgorithm}
\end{figure}
\begin{Theorem}
	If $A$ has an approximation factor of $\alpha$, then Algorithm \ref{algo:approximationMTSCSSMSS} is an $\alpha$-approximation algorithm for the MTSCSS problem.
\end{Theorem}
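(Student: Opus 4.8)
The plan is to chain together the guarantee of the black-box algorithm $A$, the edge-count behaviour of the post-processing loop in lines $3$--$12$, and the equality of optimal values proved in Lemma~\ref{def:mscssequalsmtscss}. Throughout, let $\mathrm{opt}_{s}$ and $\mathrm{opt}_{t}$ denote the number of edges of an optimal MSCSS and of an optimal MTSCSS of the input graph $G$, respectively.

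First I would record what lines $1$--$2$ give us: since $A$ is an $\alpha$-approximation algorithm for MSCSS, the subgraph $(V,E_{t})$ it returns is strongly connected and satisfies $|E_{t}|\le\alpha\cdot\mathrm{opt}_{s}$. Then I would analyse the while loop of Algorithm~\ref{algo:approximationMTSCSSMSS}, which is identical to lines $9$--$18$ of Algorithm~\ref{algo:refinedapproximationMTSCSS}, by maintaining two invariants: (i) $(V,E_{t})$ stays strongly connected, and (ii) $|E_{t}|$ never increases. For a single iteration the hypotheses of Lemma~\ref{def:approximationalgorithmfortscss } are met, since $(V,E_{t})$ is a strongly connected spanning subgraph of the twinless strongly connected graph $G$ and the edge $(v,w)$ chosen in lines $5$--$6$ joins two distinct twinless strongly connected components of $(V,E_{t})$; the lemma then yields $|S^{wv}_{p}|>0$ and that the number of twinless strongly connected components drops by at least one. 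Invariant (ii) follows because line $8$ adds exactly one edge while lines $9$--$12$ delete $|S^{wv}_{p}|\ge 1$ edges. Invariant (i) holds because the forward edges of $p$ and the new edge $(v,w)$ are never deleted, so for every deleted reverse edge $(j,i)$ the cycle formed by $p$ and $(v,w)$ still furnishes a $j$--$i$ path. Since the component count is at most $n$ initially and strictly decreases, the loop halts after at most $n-1$ iterations, at which point $(V,E_{t})$ has a single twinless strongly connected component, i.e.\ is twinless strongly connected.

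Combining these facts, Algorithm~\ref{algo:approximationMTSCSSMSS} outputs a twinless strongly connected spanning subgraph with at most $\alpha\cdot\mathrm{opt}_{s}$ edges; by Lemma~\ref{def:mscssequalsmtscss} we have $\mathrm{opt}_{s}=\mathrm{opt}_{t}$, so the output has at most $\alpha\cdot\mathrm{opt}_{t}$ edges, which is precisely the $\alpha$-approximation claim.

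The step I expect to need the most care is the implicit assertion that the while loop never gets stuck: whenever $(V,E_{t})$ is strongly connected but not twinless strongly connected, an edge $(v,w)\in E\setminus E_{t}$ whose endpoints lie in distinct twinless strongly connected components of $(V,E_{t})$ must exist. Here one uses that $G$ is twinless strongly connected, hence (Raghavan) its underlying graph is $2$-edge-connected: taking a leaf of the tree underlying the TSCC component graph of $(V,E_{t})$, the cut separating that component from the rest is too small in $(V,E_{t})$ to be a cut of the underlying graph of $G$, so some edge of $E\setminus E_{t}$ crosses it. Beyond this point the argument is a routine induction on the number of twinless strongly connected components, reusing what was already established for Algorithm~\ref{algo:refinedapproximationMTSCSS}.
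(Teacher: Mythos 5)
Your proof is correct and takes the same route as the paper, whose entire proof is a one-line pointer to the two lemmas you invoke (the equality of MSCSS and MTSCSS optima, and the single-step improvement lemma that governs the post-processing loop). You simply unpack what the paper leaves implicit: the two loop invariants (strong connectivity preserved, edge count non-increasing) and the liveness argument showing a crossing edge in $E\setminus E_{t}$ always exists while the subgraph is not yet twinless strongly connected.
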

\begin{proof}
It follows from Lemma \ref{def:mscssequalsmtscss} and Lemma \ref{def:approximationalgorithmfortscss }.	
\end{proof}
\begin{theorem}
  Algorithm \ref{algo:approximationMTSCSSMSS} runs in  $O(nm+ t_{A})$ time, where $t_{A}$ is the running time of the algorithm A.
\end{theorem}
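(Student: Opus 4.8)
The plan is to reuse the running-time analysis of Theorem~\ref{def:runningtimeapproximationMTSCSS} almost verbatim and simply add the cost $t_A$ of the initial call to algorithm~$A$. First I would note that lines~1--2 cost exactly $t_A$ and hand the while loop a strongly connected spanning subgraph $(V,E_t)$ whose number of twinless strongly connected components is some $k\le n$ (it is at most $n$ because $G_t$ has $n$ vertices, and at least $1$ because $G_t$ is strongly connected). Everything from line~3 onward is the same loop as lines~9--18 of Algorithm~\ref{algo:refinedapproximationMTSCSS}, so the remaining task is to bound that loop.

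Next I would bound the number of iterations. By Lemma~\ref{def:approximationalgorithmfortscss }, each pass through lines~3--12 replaces $G_t$ by a strongly connected spanning subgraph with at least one fewer twinless strongly connected component; moreover a crossing edge $(v,w)\in E\setminus E_t$ as required in line~5 always exists while $G_t$ is not twinless strongly connected, since $G$ is twinless strongly connected and the underlying graph of the TSCC component graph is a tree by [\cite{SR06}, Theorem~1] (the same argument used in Lemma~\ref{def:lemmareducenumberoftsccs}). Since the component count starts at $k\le n$ and cannot fall below $1$, there are at most $n-1$ iterations.

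Then I would bound the cost of one iteration. Computing the twinless strongly connected components in line~4 takes $O(n+m)$ time by Raghavan's algorithm~\cite{SR06}; scanning $E\setminus E_t$ for a crossing edge in line~5 takes $O(m)$; a simple path $p$ from $w$ to $v$ in line~7 is obtained by a single BFS/DFS in $O(n+m)$; and the for-loop of lines~9--12 runs over the at most $n-1$ edges of $p$, where the TSCC component graph $T^{L}$ of line~10 is computed once per iteration (not once per edge) in $O(n+m)$ time, after which the test in line~11 and the deletion in line~12 cost $O(1)$ per edge. Hence each iteration is $O(n+m)$, the loop costs $O(n(n+m))=O(nm)$ because $G_t$ is connected so $m\ge n-1$, and the total running time is $O(nm+t_A)$.

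The one place where care is needed — and where the pseudocode is slightly informal — is the reading of line~10: $T^{L}$ must be built once at the start of the for-loop and then queried, rather than rebuilt for every edge of $p$; with that convention the per-iteration bound is immediate. A secondary point worth a sentence is the invariant that $G_t$ remains strongly connected throughout, so that the simple path $p$ of line~7 is always available; this is exactly what the cycle-merging argument in the proof of Lemma~\ref{def:approximationalgorithmfortscss } establishes.
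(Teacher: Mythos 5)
Your proof is correct and follows essentially the same route as the paper's (much terser) argument: invoke Lemma~\ref{def:approximationalgorithmfortscss } to bound the number of while-loop iterations by $n-1$, observe each iteration costs $O(m)$, and add $t_A$ for the invocation of algorithm~$A$. The extra care you take is worthwhile — in particular, you correctly flag that line~10's computation of $T^{L}$ must be interpreted as happening once per while-iteration rather than once per edge of $p$, and you note the invariant that $G_t$ stays strongly connected so that the path $p$ in line~7 always exists; both points are glossed over by the paper but are needed for the stated bound to hold.
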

\begin{proof}
By Lemma \ref{def:approximationalgorithmfortscss }, the number of iterations of the while loop is at most $n-1$, and each iteration takes $O(m)$ time. 
\end{proof}
 Khuller et al. \cite{KRY94} presented approximation algorithms for the MSCSS problem. Furthermore, Zhao et al. \cite{ZNI03} introduced a linear time $5/3$ approximation algorithm for the MSCSS problem.  

Now we show that the running time of Algorithm \ref{algo:approximationMTSCSSMSS} can be improved by using union find data structure. Note that in each iteration of the do-while loop, at least two twinless strongly connected components are merged by adding edge $(v,w)$ to $G_{t}$. Instead of finding twinless strongly connected components in each iteration we compute these components just once then we update them after adding edges. 
 \begin{figure}[htbp]
	\begin{myalgorithm}\label{algo:approximationMTSCSSMSSunionfind}\rm\quad\\[-5ex]
		\begin{tabbing}
			\quad\quad\=\quad\=\quad\=\quad\=\quad\=\quad\=\quad\=\quad\=\quad\=\kill
			\textbf{Input:} A twinless strongly connected graph $G=(V,E)$.\\
			\textbf{Output:} a twinless strongly connected subgraph $G_{t}=(V,E_{t})$\\
			{\small 1}\> Let $(V,E_{t})$ be a solution for MSCSS problem computed \\
			{\small 2}\>\>by approximation algorithm A \\
			{\small 3}\> \textbf{for} every twinless strongly connected components $C$ of $G_{t}$ \textbf{do}\\
			{\small 4}\>\> choose a vertex $w \in C$\\
			{\small 5}\>\> \textbf{for} every vertex $v\in C\setminus \left\lbrace w \right\rbrace $ \textbf{do} \\
			{\small 6}\>\>\> UNION$(v,w)$ \\
			{\small 7}\> Build  the underlying graph $T^{R}$ of TSCC component graph of $G_{t}$.\\	
			{\small 8}\> \textbf{for} each edge $(v,w) \in E\setminus E_{t}$ \textbf{do}\\
			{\small 9}\>\> \textbf{if} Find$(v) \neq$ Find$(w)$ \textbf{then}\\
		
			{\small 10}\>\>\> find a path $p$ from $C_{v}$ to $C_{w}$ in $T^{R}$, where $C_{i}$ is a vertex in\\
	    	{\small 11}\>\>\>  $T^{R}$ corresponding to the twinless strongly connected\\
	    	{\small 12}\>\>\>  component of $G_{t}$ which contains $i$.\\
	    	{\small 13}\>\>\> Contracting the vertices of $p$ into a single vertex in $T^{R}$.\\
	    	{\small 14}\>\>\>$E_{t}\leftarrow E_{t} \cup \left\lbrace (v,w) \right\rbrace  $.\\
	    	{\small 15}\>\>\> \textbf{for} each edge $(x,y)$ on $p$ \textbf{do}\\
	    	{\small 16}\>\>\>\> \textbf{if} $(y,x) \in E_{t} $ \textbf{then}\\
	    	{\small 17}\>\>\>\>\> remove $(y,x)$ from $E_{t}$.\\ 
	    	{\small 18}\>\>UNION$(v,w)$

		\end{tabbing}
	\end{myalgorithm}
\end{figure}

\begin{Theorem}
	Algorithm \ref{algo:approximationMTSCSSMSSunionfind} runs in $(n^{2}+ t_{A})$ time, where $t_{A}$ is the running time of the algorithm A.
\end{Theorem}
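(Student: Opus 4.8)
The plan is to bound the running time by splitting it into three parts: the call to the MSCSS approximation algorithm $A$ in lines $1$--$2$, which costs $t_{A}$ by hypothesis; the preprocessing in lines $3$--$7$; and the main loop in lines $8$--$18$. I will argue that the last two parts each cost $O(n^{2})$, so that the total is $O(n^{2}+t_{A})$.

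For the preprocessing, I would first note that the strongly connected spanning subgraph $(V,E_{t})$ produced by $A$ satisfies $|E_{t}|=O(n)$, since the optimum of the MSCSS problem is at most $2n-2$ and $A$ is an approximation algorithm; hence $G_{t}$ has $O(n)$ edges. Therefore its twinless strongly connected components are computed in $O(n)$ time by Raghavan's algorithm \cite{SR06}, the $O(n)$ Union operations in lines $3$--$6$ take $O(n\,\alpha(n))$ time, and, since by [\cite{SR06}, Theorem $1$] the underlying graph $T^{R}$ of the TSCC component graph of $G_{t}$ is a tree, it has $O(n)$ vertices and edges and is built in $O(n)$ time in line $7$. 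I would also set up once an $n\times n$ Boolean adjacency matrix indexed by vertices, recording which ordered pairs currently belong to $E_{t}$, so that the test ``$(y,x)\in E_{t}$'' in line $16$ and the updates in lines $14$ and $17$ run in $O(1)$ time; this setup costs $O(n^{2})$, which is within the claimed bound.

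For the main loop, the key observation is that the branch taken when $\mathrm{Find}(v)\neq\mathrm{Find}(w)$ (lines $10$--$18$) is entered at most $n-1$ times. Indeed, if $\mathrm{Find}(v)\neq\mathrm{Find}(w)$ then $C_{v}$ and $C_{w}$ are distinct vertices of the tree $T^{R}$, so the path $p$ found in line $10$ has at least two vertices, and contracting $p$ in line $13$ strictly decreases the number of vertices of $T^{R}$; since $T^{R}$ has at most $n$ vertices, this happens at most $n-1$ times. (Equivalently, Lemma \ref{def:approximationalgorithmfortscss } guarantees that each such step merges at least two twinless strongly connected components.) A single execution of lines $10$--$18$ finds $p$ by one traversal of the tree $T^{R}$ in $O(n)$ time, contracts it in $O(n)$ time, inserts one edge, and scans the edges of $p$ while performing $O(1)$-time antiparallel-edge deletions; so it costs $O(n)$, and lines $10$--$18$ contribute $O(n)\cdot(n-1)=O(n^{2})$ in total. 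The remainder of the outer loop performs a constant number of Find operations (and one Union) for each of the at most $n(n-1)$ edges of $E\setminus E_{t}$, i.e.\ $O(n^{2}\,\alpha(n))=O(n^{2})$ work. Summing the three parts gives the stated $O(n^{2}+t_{A})$ bound.

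The step I expect to be the main obstacle is the amortized analysis of lines $10$--$18$: a single path search in $T^{R}$ may cost $\Theta(n)$, so one must use that each such search is immediately followed by a contraction that shrinks $T^{R}$, which caps the number of searches at $O(n)$, and one must ensure that the membership tests and deletions in lines $14$--$17$ take $O(1)$ time via the precomputed adjacency matrix. A minor point is the bound $|E\setminus E_{t}|=O(n^{2})$, which uses that $G$ has no parallel edges. Correctness of the transformation carried out by the loop (namely that it terminates with $G_{t}$ twinless strongly connected) follows from Lemma \ref{def:approximationalgorithmfortscss }, exactly as for Algorithm \ref{algo:refinedapproximationMTSCSS}.
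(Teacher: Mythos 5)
Your proposal is correct and follows essentially the same line of reasoning as the paper's own (much terser) proof: bound the cost of the approximation algorithm $A$ by $t_A$, bound the number of successful merges by $n-1$ via Lemma \ref{def:lemmareducenumberoftsccs}, charge $O(n)$ per merge for the path search and contraction in $T^R$, and charge the union-find work against the $O(m)\le O(n^2)$ edges scanned. You add two implementation details that the paper leaves implicit but that are genuinely needed to make the bound tight, namely that $|E_t|=O(n)$ (so preprocessing is cheap) and that the membership test in line $16$ needs an $O(1)$-time structure such as an adjacency matrix; these are reasonable completions rather than a different route. One small caveat you share with the paper: with path-compression union-find the edge-scanning phase is $O(m\,\alpha(n))$, so writing $O(n^2\alpha(n))=O(n^2)$ is a slight abuse; to get a clean $O(n^2)$ you could instead use the array-based ``relabel the smaller component'' scheme, which gives $O(1)$ Find and $O(n\log n)$ total Union time, but this is a minor point that both you and the paper gloss over in the same way.
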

\begin{proof}
	By Lemma \ref{def:lemmareducenumberoftsccs} ,the number of UNION operations is at most $n-1$. Since $|E\setminus E_{t}|\leq m$, the number of F
	ind operations is at most $2m$. Furthermore, calculating a path $p$ in $T^{R}$ requires $O(n)$ time and updating $T^{R}$ in line $13$ requires $O(n)$ time. UNION and Find operations can be implemented in almost linear time (see \cite{CLRS09}).
\end{proof}
\section{Computing Twinless bridges}
In this section, we explain how to calculate twinless bridges in strongly connected graph by combining the algorithm of Italiano et al. \cite{ILS12} with  Raghavan's algorithm \cite{SR06}.
\begin{Lemma}\label{def:lemmasbtb}
	Let $G=(V,E)$ be a strongly connnected graph. Then each strong bridge in $G$ is a twinless birdge $G$.
\end{Lemma}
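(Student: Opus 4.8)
The plan is to derive this immediately from the definitions, observing that twinless strong connectivity is a strengthening of ordinary strong connectivity. First I would record the elementary fact that any twinless strongly connected directed graph is strongly connected: the definition in Section~\ref{def:gtan} requires, for every ordered pair $(v,w)$, a path from $v$ to $w$ and a path from $w$ to $v$ in the graph, and the mere existence of these paths for all pairs is exactly strong connectivity; the additional ``no antiparallel-edge reuse'' condition can only make the property harder to satisfy, never easier.

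Next, let $e \in E$ be a strong bridge of $G$. By the definition of a strong bridge, the subgraph obtained from $G$ by deleting $e$ is not strongly connected. Applying the contrapositive of the observation above to this subgraph, we conclude that it is not twinless strongly connected either. Hence, by definition, $e$ is a twinless bridge, which is exactly the claim.

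There is essentially no hard step here; the only point requiring a little care is to invoke the right formulation of twinless strong connectivity, namely the pairwise-path definition of Section~\ref{def:gtan}, rather than Raghavan's characterization via $2$-edge-connectivity of the underlying undirected graph. The latter would be awkward to use directly, since deleting the directed edge $(u,v)$ need not delete the corresponding undirected edge when the antiparallel edge $(v,u)$ is still present in $G$; the reachability-based argument avoids this entirely. One could also remark in passing that the converse fails -- a twinless bridge need not be a strong bridge, as is already illustrated by the examples earlier in the paper -- which is why the lemma is stated as a one-directional inclusion, but this is not needed for the proof.
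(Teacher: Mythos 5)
Your proof is correct and follows exactly the route the paper intends: the paper's own proof is simply ``Immediate from the definition,'' and your argument just spells out what that means, namely that twinless strong connectivity implies strong connectivity, so the contrapositive turns a strong bridge directly into a twinless bridge. Your side remark about preferring the pairwise-path definition over Raghavan's $2$-edge-connectivity characterization is sensible but not essential, since the paper makes the same deduction without it.
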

\begin{proof}
	Immediate from the definition.
\end{proof}
Note that the converse of Lemma \ref{def:lemmasbtb} is not necessarily true. Algorithm \ref{algo:algorithmtwinlessbirdges} can find all the twinless bridges. Its correctness follows from Lemma \ref{def:lemmasbtb} and [\cite{SR06}, Theorem $2$].

\begin{figure}[h]
	\begin{myalgorithm}\label{algo:algorithmtwinlessbirdges}\rm\quad\\[-5ex]
		\begin{tabbing}
			\quad\quad\=\quad\=\quad\=\quad\=\quad\=\quad\=\quad\=\quad\=\quad\=\kill
			\textbf{Input:} A strongly connected graph $G=(V,E)$.\\
			\textbf{Output:} The set of all twinless bridges $T^{b}$\\
			{\small 1}\> $A \leftarrow \emptyset$\\
			{\small 2}\> Compute the set of strong bridges in $G$\\
			{\small 3}\>\textbf{for} each strong bridge $e$ in $G$ \textbf{do} \\
			{\small 4}\>\>$A \leftarrow A\cup \left\lbrace e\right\rbrace $ \\
			{\small 5}\> $T^{b} \leftarrow A$.\\ 
			{\small 6}\> \textbf{for} each edge $e\in E \setminus A$ \textbf{do} \\
			{\small 7}\>\> \textbf{if} the underlying graph of $G \setminus \left\lbrace  e\right\rbrace $ \ is not $2$-edge-connected \textbf{then}\\
			{\small 8}\>\>\> $T^{b}\leftarrow T^{b} \cup \left\lbrace e\right\rbrace  $.
		\end{tabbing}
	\end{myalgorithm}
\end{figure}
\begin{Theorem}
	The twinless bridges of a strongly connected graph can be computed in $O((m-s_{tb})m)$
\end{Theorem}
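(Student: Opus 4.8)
The plan is to bound the two phases of Algorithm \ref{algo:algorithmtwinlessbirdges} separately and then add the costs. First I would handle lines $1$--$5$: by the algorithm of Italiano et al. \cite{ILS12}, the set of all strong bridges of a strongly connected graph can be computed in $O(n+m)$ time, and copying that set into $A$ and then into $T^{b}$ costs $O(m)$. Since $G$ is strongly connected we have $m\geq n$, so this phase runs in $O(m)$ time; moreover $|A|$ is exactly the number of strong bridges of $G$, which is the quantity $s_{tb}$ appearing in the statement.

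Next I would analyse the for-loop in lines $6$--$8$. It performs exactly $|E\setminus A| = m-|A| = m-s_{tb}$ iterations, one for each edge that is not a strong bridge. The point that makes this loop legitimate is that for every such edge $e$ the subgraph $G\setminus\{e\}$ is still strongly connected (this is precisely what it means for $e$ not to be a strong bridge), so Raghavan's characterisation [\cite{SR06}, Theorem $2$] applies to it: $G\setminus\{e\}$ is twinless strongly connected if and only if the underlying undirected graph of $G\setminus\{e\}$ is $2$-edge-connected. Hence the test in line $7$ is correct, and — exactly as in the proof for Algorithm \ref{algo:algorithmtwinlessarticulationps} — testing $2$-edge-connectivity of an undirected graph on $n$ vertices and at most $m$ edges takes $O(n+m)=O(m)$ time \cite{T74,JS13}. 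Combining, the loop contributes $(m-s_{tb})\cdot O(m)$, and adding the $O(m)$ from the first phase gives the claimed $O((m-s_{tb})m)$ bound.

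Correctness of the output follows from what has already been observed: by Lemma \ref{def:lemmasbtb} every strong bridge is a twinless bridge, so the edges placed in $T^{b}$ in lines $3$--$4$ belong there, and for the remaining edges the equivalence above shows that $e$ is added in line $8$ iff $G\setminus\{e\}$ fails to be twinless strongly connected, i.e. iff $e$ is a twinless bridge.

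I do not expect a real obstacle here; the only step requiring care is the justification that restricting the loop to $E\setminus A$ loses nothing — this uses Lemma \ref{def:lemmasbtb} (so no twinless bridge is overlooked) together with the fact that deleting a non-strong-bridge keeps the graph strongly connected (so Raghavan's theorem is actually applicable to each tested edge). One may additionally remark that $m-s_{tb}\geq 0$ always, and that the stated bound is understood to subsume the $O(m)$ preprocessing cost, so it is meaningful even in degenerate cases such as a simple directed cycle, where every edge is a strong bridge and the loop is empty.
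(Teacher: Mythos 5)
Your argument is essentially the paper's own: strong bridges are found in linear time, the loop in lines $6$--$8$ runs exactly $m-s_{tb}$ times, and for each non-strong-bridge $e$ the graph $G\setminus\{e\}$ remains strongly connected so Raghavan's characterisation applies, with each $2$-edge-connectivity test costing $O(m)$. The only additions beyond the paper's proof are the explicit correctness remarks and the observation that $m-s_{tb}\geq 0$, neither of which changes the approach.
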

\begin{proof}
	Italiano et al. \cite{ILS12,FILOS12} showed that strong bridges can be calculated in linear time. The number of iterations of the for-loop in lines $6$--$8$ is $m-s_{tb}$. For every edge $e \in E \setminus A$, the subgraph $G\setminus\left\lbrace e \right\rbrace $ is strongly connected. Therefore, by [\cite{SR06}, Theorem $2$], Raghavan's algorithm \cite{SR06} can test whehter or not $G\setminus\left\lbrace e\right\rbrace $ is twinless strongly connected in $O(m)$ time. Checking $2$-edge-connected of an undirected connected graph can be done in $O(m)$ time. Thus, line $7$ takes $O(m)$ time.
\end{proof}
In \cite{ILS12} Italiano et al. proved that the strong bridges of a strongly connected graph $G$ can be found in $O(n(m+n)$ time by calculating a subgraph consisting of inbranching and outbranching rooted at the same vertex and testing each edge of this subgraph whether it is a strong bridge. We can modify this algorithm to find all twinless bridges in a twinless strongly connected graphs.

\begin{figure}[h]
	\begin{myalgorithm}\label{algo:algorithmtwinlessbirdgesusiingsubgraph}\rm\quad\\[-5ex]
		\begin{tabbing}
			\quad\quad\=\quad\=\quad\=\quad\=\quad\=\quad\=\quad\=\quad\=\quad\=\kill
			\textbf{Input:} A twinless strongly connected graph $G=(V,E)$.\\
			\textbf{Output:} The set of all twinless bridges $A$\\
			{\small 1}\> $A \leftarrow \emptyset$\\
			{\small 2}\> Let $B_{s}$ be the set of strong bridges in $G$\\
			{\small 3}\>\textbf{for} each edge $e \in B_{s}$  \textbf{do} \\
			{\small 4}\>\>$A \leftarrow A\cup \left\lbrace e\right\rbrace $ \\
			 {\small 5}\> compute a subgraph $G_{t}=(V,E_{t})$ of $G$ using Algorithm \ref{algo:approximationMTSCSS}\\
			{\small 6}\>\textbf{for} each edge $e\in E_{t}\setminus B_{s}$  \textbf{do} \\
			{\small 7}\>\> \textbf{if} $G\setminus \left\lbrace e \right\rbrace $ is not twinless strongly connected \textbf{then}\\
			{\small 8}\>\>\>$A \leftarrow A\cup \left\lbrace e\right\rbrace $ 
		\end{tabbing}
	\end{myalgorithm}
\end{figure}
\begin{theorem}
	The running time of Algorithm \ref{algo:algorithmtwinlessbirdgesusiingsubgraph} is $O(nm)$.
\end{theorem}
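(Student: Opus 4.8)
The plan is to bound the cost of each of the three phases of the algorithm separately and observe that the dominant term is $O(nm)$. First I would handle lines 1--4: by the result of Italiano et al.\ \cite{ILS12,FILOS12} the set $B_s$ of strong bridges of $G$ can be computed in linear time $O(n+m)$, and copying these edges into $A$ costs $O(|B_s|)=O(m)$. Next, line 5 invokes Algorithm \ref{algo:approximationMTSCSS} on $G$; by Theorem \ref{def:runningtimeapproximationMTSCSS} this produces a twinless strongly connected spanning subgraph $G_t=(V,E_t)$ in $O(nm)$ time, and moreover, by the approximation bound established for that algorithm, $|E_t|\le 3n-3$, so $E_t$ has $O(n)$ edges. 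This last fact is the key structural point: the final loop ranges over $E_t\setminus B_s$, not over all of $E$.

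The heart of the bound is the loop in lines 6--8. It executes at most $|E_t\setminus B_s|\le |E_t| = O(n)$ iterations. In each iteration we must test whether $G\setminus\{e\}$ is twinless strongly connected; by [\cite{SR06}, Theorem $2$] this is equivalent to checking whether the underlying undirected graph of $G\setminus\{e\}$ is $2$-edge-connected, which can be done in $O(m)$ time by a single DFS \cite{T74,JS13}. Hence the loop costs $O(n)\cdot O(m)=O(nm)$. Summing the three phases gives $O(n+m)+O(nm)+O(nm)=O(nm)$, as claimed.

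For correctness (which underlies the validity of restricting the test to $E_t$) I would appeal to the fact, already used by Italiano et al.\ for strong bridges, that every twinless bridge of $G$ is either a strong bridge or else lies in the subgraph $G_t$: if an edge $e$ is not in $G_t$, then $G_t$ is a twinless strongly connected spanning subgraph of $G\setminus\{e\}$, so $G\setminus\{e\}$ is itself twinless strongly connected and $e$ is not a twinless bridge. Combined with Lemma \ref{def:lemmasbtb}, every twinless bridge is caught either in lines 3--4 or in lines 6--8. The main obstacle is making sure the $O(n)$ bound on $|E_t|$ is actually available from the cited theorem about Algorithm \ref{algo:approximationMTSCSS}; once that is in hand, the running-time argument is a routine accounting of the three phases.
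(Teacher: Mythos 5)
Your proposal is correct and follows essentially the same approach as the paper's proof: compute $B_s$ in linear time, invoke Algorithm \ref{algo:approximationMTSCSS} in $O(nm)$ time, and use the bound $|E_t|=O(n)$ from the approximation guarantee so that the loop in lines $6$--$8$ runs $O(n)$ times at $O(m)$ per iteration. The only addition is your brief correctness remark justifying the restriction to $E_t$, which the paper handles separately and is not needed for the running-time bound itself.
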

\begin{proof}
	The set $B_{s}$ can be found in linear time using the algorithm of Italiano et al. \cite{ILS12}. Moreover, by Theorem \ref{def:runningtimeapproximationMTSCSS}, the subgraph $G_{t}=(V,E_{t})$ can be computed in $O(nm)$ time. Since $|E_{t}\setminus B_{s}|<3n$, the number of iteration of the for loop is at most $3n$. Line $7$ can be implemented in linear time using Raghavan's algorithm \cite{SR06}. Therefore, lines $6$--$8$ take $O(nm)$ time.
\end{proof}
\section{$2$-edge-twinless connected graphs}
In this section we show that $2$-edge connected directed graphs have no twinless bridges.
\begin{lemma}
	Let $G=(V,E)$ be a strongly connected graph. The graph $G$ is $2$-edge-connected if and only if $G$ is $2$-edge-twinless-connected.
\end{lemma}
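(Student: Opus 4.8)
The plan is to handle the two implications separately; the forward direction is immediate and the converse contains all the work. For ``$2$-edge-twinless-connected $\Rightarrow$ $2$-edge-connected'': if $G$ has no twinless bridge then, since by Lemma~\ref{def:lemmasbtb} every strong bridge is a twinless bridge, $G$ has no strong bridge either, so $G$ is $2$-edge-connected.

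For the converse I would argue by contradiction. Suppose $G$ is $2$-edge-connected but some arc $e=(u,v)$ is a twinless bridge. Since $G$ has no strong bridge, $G':=G\setminus\{e\}$ is still strongly connected, yet it is not twinless strongly connected; hence by [\cite{SR06}, Theorem~$2$] the underlying undirected graph $H$ of $G'$ is connected but not $2$-edge-connected, so $H$ has a bridge whose endpoints $x,y$ partition $V$ into nonempty sets $X\ni x$ and $Y\ni y$ across which $\{x,y\}$ is the only edge of $H$.

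The heart of the argument is a cut-counting step. In $G'$, every arc crossing the cut $(X,Y)$ projects onto the undirected edge $\{x,y\}$, so it equals $(x,y)$ or $(y,x)$; as $G'$ is strongly connected it must contain an arc in each direction across the cut, so both $(x,y)$ and $(y,x)$ are arcs of $G'$ and therefore of $G$, and in particular $e\notin\{(x,y),(y,x)\}$. Consequently the arcs of $G$ crossing $(X,Y)$ are precisely $(x,y)$, $(y,x)$, and possibly $e$. If $e$ does not cross the cut, then $(x,y)$ is the unique $X\to Y$ arc of $G$, so $G\setminus\{(x,y)\}$ is not strongly connected and $(x,y)$ is a strong bridge; if $e$ does cross the cut, then whichever of $(x,y),(y,x)$ points ``against'' $e$ across the cut is the unique arc of $G$ in its direction and is a strong bridge. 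Either way this contradicts $2$-edge-connectivity, so $G$ has no twinless bridge, and since the absence of strong bridges forces $|V|>2$, the graph $G$ is $2$-edge-twinless-connected.

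The main obstacle I anticipate is formulating the cut argument cleanly: one has to keep in mind that the undirected bridge of $H$ may arise from a pair of antiparallel arcs of $G'$ rather than from a single arc, and that the deletion of $e$ is already incorporated into $H$, so that $e$ is the only crossing arc available in $G$ beyond $(x,y)$ and $(y,x)$. Once the case distinction on whether $e$ crosses the cut is set up correctly, identifying the strong bridge in each case is routine bookkeeping.
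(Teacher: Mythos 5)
Your proof is correct and takes essentially the same route as the paper: the forward direction is immediate via Lemma~\ref{def:lemmasbtb}, and for the converse both arguments locate in $G\setminus\{e\}$ an antiparallel pair of arcs forming a bridge between twinless strongly connected components and then show that one orientation of that pair is a strong bridge of $G$. The only cosmetic difference is that the paper reaches this bridge through the TSCC component tree of [\cite{SR06}, Theorem~1] and sidesteps a case split by picking a tree edge lying on a $w$-to-$v$ path, whereas you invoke [\cite{SR06}, Theorem~2] to obtain a bridge of the underlying undirected graph and handle explicitly whether $e$ crosses the cut.
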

\begin{proof}
	$\Leftarrow$ It follows from definition.	
	$\Rightarrow$ Suppose that $G$ is $2$-edge-connected and it contains a twinless bridge $e=(v,w)$. Then $G \setminus \left\lbrace e \right\rbrace $ is strongly connected. The vertices $v,w$ are in distinct twinless strongly connected components in $G\setminus \left\lbrace e \right\rbrace $. Let $G^{vw}_{tscc}=(V_{tscc}.E_{tscc})$ be the TSCC component graph of the subgraph $G\setminus \left\lbrace e \right\rbrace $. By [\cite{SR06}, Theorem $1$], the underlying graph of $G^{vw}_{tscc}$ is a tree and each edge in this tree corresponds to antiparallel edges in $G\setminus \left\lbrace e \right\rbrace $ . Therefore, there is a simple path $p$ from $w$ to $v$ in $G\setminus \left\lbrace e \right \rbrace $  such that neither $(v,w)$ nor $(w,v)$ belongs to $p$. Let $(i, j)$ be an edge in $p$ that corresponds to an edge in the underlying graph of $G^{vw}_{tscc}$. Therefore, the vertices $i,j$ are not in the same twinless strongly connected component of $G\setminus \left\lbrace e \right\rbrace $. Notice that $G\setminus \left\lbrace (i,j) \right\rbrace $ is not strongly connected, which contradicts that $G$ is $2$-edge-connected.
\end{proof}
\section{Computing $2$-vertex-twinless-connected components}
In this section we present an algorithm for computing the $2$-vertex-twinless-connected components of a directed graph.
The following lemma shows that every two distinct $2$-vertex-twinless-connected components intersect in at most one vertex.
\begin{Lemma}\label{def:2vtcatmostonevertex}
	
	Let $G=(V,E)$ be a twinless strongly connected graph. Let $C^{2vt}_{1},C^{2vt}_{2}$ be distinct $2$-vertex-twinless-connected components. Then $C^{2vt}_{1}$ and $C^{2vt}_{2}$ have at most one vertex in common.
\end{Lemma}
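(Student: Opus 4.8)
The plan is to argue by contradiction: suppose $C^{2vt}_1$ and $C^{2vt}_2$ are distinct $2$-vertex-twinless-connected components sharing two distinct vertices $x,y$. I will show that then $C^{2vt}_1 \cup C^{2vt}_2$ induces a $2$-vertex-twinless-connected subgraph, contradicting the maximality of each component (and the fact that they are distinct, hence neither contains the other). The key structural fact I would use is the analogue, for $2$-vertex-twinless-connectivity, of the standard "two blocks sharing two vertices merge" argument for biconnected components of undirected graphs: any vertex $z$ we delete from $C^{2vt}_1 \cup C^{2vt}_2$ leaves at least one of $x,y$ inside the corresponding component, and through that surviving shared vertex the two pieces are still glued together into a twinless strongly connected whole.

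First I would set up notation: let $H_i$ denote the subgraph induced on $C^{2vt}_i$, so each $H_i$ is twinless strongly connected, has at least three vertices, and has no twinless articulation point. Let $W = C^{2vt}_1 \cup C^{2vt}_2$ and let $H$ be the induced subgraph on $W$. To show $H$ is $2$-vertex-twinless-connected I must check $|W| \ge 3$ (immediate, since $|C^{2vt}_i| \ge 3$) and that $H \setminus \{z\}$ is twinless strongly connected for every $z \in W$. Fix such a $z$. Without loss of generality $z \notin C^{2vt}_2$ or $z$ equals one of $x,y$; in either case at least one of $x,y$, say $x$, lies in $C^{2vt}_1 \setminus \{z\}$ and also in $C^{2vt}_2 \setminus \{z\}$ (using $|\{x,y\}|=2$ so deleting one vertex cannot kill both). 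Now $H_1 \setminus \{z\}$ is twinless strongly connected (if $z \in C^{2vt}_1$, because $H_1$ has no twinless articulation point and $|C^{2vt}_1|\ge 3$; if $z \notin C^{2vt}_1$, because $H_1$ itself is twinless strongly connected), and likewise $H_2 \setminus \{z\}$ is twinless strongly connected, and both contain $x$.

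Next I would use Raghavan's characterization: a directed graph is twinless strongly connected iff it contains a strongly connected spanning subgraph with no antiparallel edges, equivalently (\cite{SR06}) iff it is strongly connected and its underlying undirected graph is $2$-edge-connected. So it suffices to show $H \setminus \{z\}$ is strongly connected and its underlying graph is $2$-edge-connected. Strong connectivity is routine: any two vertices of $W \setminus \{z\}$ lie in $C^{2vt}_1\setminus\{z\}$ or $C^{2vt}_2\setminus\{z\}$, each of which is strongly connected, and the two overlap at $x$, so one can route through $x$. For $2$-edge-connectivity of the underlying graph: the underlying graphs of $H_1\setminus\{z\}$ and $H_2\setminus\{z\}$ are each $2$-edge-connected and share the vertex $x$; the union of two $2$-edge-connected undirected graphs sharing at least one vertex is $2$-edge-connected (a bridge of the union would have to be a bridge of one of the two pieces, since removing it disconnects that piece from $x$ or disconnects within it). This gives that $H\setminus\{z\}$ is twinless strongly connected, for every $z$, hence $H$ is $2$-vertex-twinless-connected, contradicting maximality of $C^{2vt}_1$ (as $C^{2vt}_1 \subsetneq W$ because $C^{2vt}_2 \not\subseteq C^{2vt}_1$).

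I expect the main obstacle to be the gluing step phrased purely in the twinless-strongly-connected language rather than via the $2$-edge-connected underlying graph; routing paths directly through $x$ while simultaneously avoiding the "antiparallel reuse" condition in the definition of twinless strong connectivity is delicate, which is exactly why I would route everything through Raghavan's $2$-edge-connectivity equivalence, where the "union of two $2$-edge-connected graphs sharing a vertex" fact is clean. A secondary point needing care is the case analysis on where $z$ lies (inside one component only, or in the intersection), and verifying in each case that a shared vertex survives — but the observation $|\{x,y\}| = 2 > 1$ handles it uniformly.
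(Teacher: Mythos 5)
Your proof is correct and follows the same overall contradiction strategy as the paper: assume two distinct $2$-vertex-twinless-connected components share two vertices, and show the induced subgraph on their union would be $2$-vertex-twinless-connected, violating maximality. The only substantive difference is the glue. The paper fixes the removed vertex $v$, picks a surviving common vertex $w$, observes that each vertex of $C^{2vt}_{1}\setminus\{v\}$ (resp.\ $C^{2vt}_{2}\setminus\{v\}$) lies in the same twinless strongly connected component as $w$, and concludes via Raghavan's Lemma $1$ (components remain clumped when edges are added) that the whole union minus $v$ is a single twinless strongly connected component. You instead descend to the underlying undirected graph through Raghavan's Theorem $2$ (twinless strongly connected $\Leftrightarrow$ strongly connected with $2$-edge-connected underlying graph) and use the elementary fact that the union of two $2$-edge-connected undirected graphs sharing a vertex is $2$-edge-connected. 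Both routes are valid and roughly the same length; yours is more self-contained in that it verifies the gluing step explicitly rather than citing a black-box lemma, whereas the paper's stays entirely in the directed/twinless setting. One small stylistic point: the sentence ``Without loss of generality $z\notin C^{2vt}_{2}$ or $z$ equals one of $x,y$'' is not quite an exhaustive case split as written (e.g.\ $z$ could lie in $C^{2vt}_{2}\cap C^{2vt}_{1}$ without being $x$ or $y$); but the observation you actually rely on, namely that since $|\{x,y\}|=2$ removing a single vertex $z$ leaves at least one of $x,y$ in both components, is correct and is all the argument needs, so no gap results.
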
 
\begin{proof}
	Suppose for contradiction that $C^{2vt}_{1}$ and $C^{2vt}_{2}$ have at least two vertices in common.
	Let $v \in C^{2vt}_{1} \cup C^{2vt}_{2}$. Then there is a vertex $w\in (C^{2vt}_{1} \cap C^{2vt}_{2})\setminus \left\lbrace  v\right\rbrace  $. For each vertex $x \in C^{2vt}_{1}$, the vertices $x,w$ are in the same twinless strongly connected component of $G\left[ ( C^{2vt}_{1} \cup C^{2vt}_{2} ) \setminus \left\lbrace  v \right\rbrace   \right] $. Moreover, for each vertex  $y$ in $C^{2vt}_{2}$ the vertices $y,w$ lie in the same twinless strongly connected component of $G\left[ ( C^{2vt}_{1} \cup C^{2vt}_{2} ) \setminus \left\lbrace  v \right\rbrace   \right] $. Therefore, by [\cite{SR06}, Lemma $1$] the subgraph $G\left[ ( C^{2vt}_{1} \cup C^{2vt}_{2} ) \setminus \left\lbrace  v \right\rbrace   \right] $ is twinless strongly connected, which contradicts that $C^{2vt}_{1}$ is a $2$-vertex-twinless-connected. 
\end{proof}
\begin{Lemma}\label{def:2vtsubsetof2vc}
	Each $2$-vertex-twinless-connected component in a twinless strongly connected is a subset of $2$-vertex-connected component.	
\end{Lemma}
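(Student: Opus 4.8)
The plan is to strip the word ``twinless'' off the hypothesis, reducing the statement to a maximality argument about ordinary $2$-vertex-connectedness. First I would unwind the definitions. Let $C$ be a $2$-vertex-twinless-connected component, so by the definition of $2$-vertex-twinless-connectedness (the case $k=2$) we have $|C|\ge 3$, the induced subgraph $G[C]$ is twinless strongly connected, and for every $x\in C$ the induced subgraph $G[C\setminus\{x\}]$ is twinless strongly connected.

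Next I would invoke the elementary fact, already recorded in Section~\ref{def:gtan}, that a twinless strongly connected digraph is in particular strongly connected: it contains a strongly connected spanning subgraph (one without antiparallel edges), and any graph with a strongly connected spanning subgraph is itself strongly connected. Applying this to $G[C]$ and to each $G[C\setminus\{x\}]$, we conclude that $G[C]$ is strongly connected and that $G[C\setminus\{x\}]$ is strongly connected for every $x\in C$. Together with $|C|\ge 3$, this is exactly the statement that $G[C]$ is $2$-vertex-connected (a strongly connected digraph on at least three vertices with no strong articulation point). Hence $C$ is a vertex set whose induced subgraph is $2$-vertex-connected.

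Finally I would appeal to maximality. A $2$-vertex-connected component is by definition a maximal vertex set $W$ such that $G[W]$ is $2$-vertex-connected. Since $G$ is finite and $C$ is itself a $2$-vertex-connected vertex set (possibly not maximal), $C$ extends to a maximal one: repeatedly adjoin a vertex to $C$ as long as the induced subgraph remains $2$-vertex-connected; the procedure terminates because $|V|<\infty$, and the resulting set is a $2$-vertex-connected component containing $C$. This proves the lemma.

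The only delicate point — and thus the ``main obstacle'', though it is a minor one — is purely definitional: one must make sure that the notion of $2$-vertex-connectedness used to define $2$-vertex-connected components is precisely ``at least three vertices and strongly connected after deleting any single vertex'', so that the conclusion of the second step is literally the hypothesis consumed in the third step, and one must confirm that maximal $2$-vertex-connected vertex sets absorb every $2$-vertex-connected vertex set. Both are immediate from the definitions recalled in the introduction. One should also note that the inclusion is strict in general and that there is no converse: a $2$-vertex-connected component need not lie inside any $2$-vertex-twinless-connected component, as Figure~\ref{fig:newconcepts} illustrates.
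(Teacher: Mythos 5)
Your overall line of reasoning is the right one and matches what lies behind the paper's terse ``immediate from the definition'': unwind the definition of a $2$-vertex-twinless-connected component to see that $G[C]$ and every $G[C\setminus\{x\}]$ are twinless strongly connected, observe that twinless strong connectivity implies strong connectivity, conclude that $G[C]$ is $2$-vertex-connected, and then enlarge $C$ to a maximal $2$-vertex-connected vertex set. The one step that does not survive scrutiny is the final extension. The greedy rule ``adjoin one vertex at a time as long as the induced subgraph stays $2$-vertex-connected'' can stall strictly below a maximal set, because for directed graphs induced $2$-vertex-connectedness is not preserved under taking intermediate subsets. For instance, let $U=\{a,b,c\}$ be a bidirected triangle and add $d,e$ with arcs $b\to d$, $d\to a$, $d\to e$, $c\to e$, $e\to b$, $e\to d$. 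Then $G[U\cup\{d,e\}]$ is $2$-vertex-connected, but neither $G[U\cup\{d\}]$ nor $G[U\cup\{e\}]$ is (deleting $a$ strands $d$ in the first, deleting $b$ strands $e$ in the second), so the greedy procedure started from $U$ terminates at $U$ even though $U$ is not a $2$-vertex-connected component.

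The repair is immediate and leaves the rest of your argument intact: among all vertex sets $U$ with $C\subseteq U$ and $G[U]$ $2$-vertex-connected (a nonempty finite family, since $C$ itself belongs to it), choose one that is maximal under inclusion. Any $W$ with $U\subsetneq W$ and $G[W]$ $2$-vertex-connected would also contain $C$, hence would belong to the family, contradicting the choice of $U$. Thus $U$ is a $2$-vertex-connected component and $C\subseteq U$. With that substitution in place your proof is correct and is, in spirit, precisely the unwinding of definitions the paper alludes to; your closing remark that the inclusion can be strict and has no converse is also accurate, as Figure~\ref{fig:newconcepts} shows.
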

\begin{proof}
	Immediate from the definition.
\end{proof}
Jaberi proved in his work \cite{Jaberi16} a connection between $2$-vertex-connected components and dominator trees. 
\begin{theorem}\cite{Jaberi16}\label{def:threl2vtcsanddt}
	Let $G=(V,E)$ be a strongly connected graph and let $v$ be an arbitrary vertex in $G$. Let $C^{2vc}$ be a $2$-vertex-connected component of $G$. Then either all elements of $C^{2vc}$ are direct successors of some vertex $w\notin C^{2vc} $ or all elements $C^{2vc}\setminus \lbrace w\rbrace$ are direct successors of some vertex $w \in C^{2vc}$ in the dominator tree $DT(v)$ of the flowgraph $G(v)$.
\end{theorem}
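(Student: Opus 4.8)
The plan is to fix the root $v$ and to locate the unique vertex $r$ that is the nearest common ancestor, in $DT(v)$, of all vertices of $C:=C^{2vc}$; I will then show that every vertex of $C$ other than $r$ is a child of $r$ in $DT(v)$. Taking $w:=r$, this yields the first alternative of the statement when $r\notin C$ and the second when $r\in C$. Throughout I would use only the standard facts about dominator trees of a flowgraph $G(v)$: a vertex $d$ dominates a vertex $u$ (that is, every path from $v$ to $u$ in $G$ passes through $d$) if and only if $d$ is an ancestor of $u$ in $DT(v)$; the parent of $u$ in $DT(v)$ is $\mathrm{idom}(u)$; and if $d$ does not dominate $u$ then $G$ has a path from $v$ to $u$ avoiding $d$.

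To derive a contradiction I would assume that some $u\in C\setminus\{r\}$ has $z:=\mathrm{idom}(u)\neq r$. Since $r$ is an ancestor of $u$ but not its parent, $r$ is a proper ancestor of $z$; in particular $z\neq v$. Because $r$ is the nearest common ancestor of $C$ while $z$ lies strictly below $r$, the subtree of $z$ cannot contain all of $C$, so I can fix a vertex $x\in C$ not lying in the subtree of $z$ (note $x\neq u$, since $u$ is a child of $z$).

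The technical heart of the argument is the claim that every path from $x$ to $u$ in $G$ passes through $z$. Indeed, $z$ does not dominate $x$ (as $x$ is not a descendant of $z$ and $z\neq v$), so there is a path from $v$ to $x$ in $G$ avoiding $z$; if there were also a path from $x$ to $u$ avoiding $z$, their concatenation would be a walk from $v$ to $u$ avoiding $z$, contradicting that $z=\mathrm{idom}(u)$ dominates $u$. I expect the conceptual obstacle to be recognizing that one should reduce matters to this ``every $x$-to-$u$ path meets $z$'' statement in the first place --- that is, seeing that a $2$-vertex-connected component must attach to a single node of $DT(v)$ and cannot spread out along a chain or a wide subtree; the dominator manipulation itself is then routine.

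With the claim in hand I would conclude by splitting on whether $z\in C$. Since $G[C]$ is strongly connected (being $2$-vertex-connected), it contains a path from $x$ to $u$. If $z\notin C$, this path lies inside $C$ and hence avoids $z$, contradicting the claim. If $z\in C$, the claim shows that $G[C]\setminus\{z\}$ has no path from $x$ to $u$, so $G[C]\setminus\{z\}$ is not strongly connected; since $|C|\geq 3$, this makes $z$ a strong articulation point of $G[C]$, contradicting that $G[C]$ is $2$-vertex-connected. In either case we reach a contradiction, so $\mathrm{idom}(u)=r$ for all $u\in C\setminus\{r\}$, which is precisely the claimed dichotomy (depending on whether $r\in C$).
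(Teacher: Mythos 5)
The paper states this theorem with a citation to \cite{Jaberi16} and gives no proof of its own, so there is no in-text argument to compare against; I can only assess your proposal on its merits, and it is correct. Identifying $w$ with the nearest common ancestor $r$ of $C^{2vc}$ in $DT(v)$ is the right move (indeed it is forced: any vertex $w$ satisfying the conclusion must be $r$), and your central claim --- that if $z=\mathrm{idom}(u)$ lies strictly below $r$ and $x\in C^{2vc}$ lies outside the subtree of $z$, then every $G$-path from $x$ to $u$ must traverse $z$ --- is a clean and correct translation of dominator structure into a separation statement. The derivation is sound: since $z$ does not dominate $x$, there is a $v$-to-$x$ path avoiding $z$, so a $z$-avoiding $x$-to-$u$ path would yield (after extracting a simple path from the concatenated walk) a $v$-to-$u$ path avoiding $z$, contradicting $z=\mathrm{idom}(u)$. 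Your final case split is also handled correctly: if $z\notin C^{2vc}$, strong connectivity of $G[C^{2vc}]$ supplies a $z$-avoiding $x$-to-$u$ path and kills the claim; if $z\in C^{2vc}$, then since $x,u\in C^{2vc}\setminus\{z\}$ and $|C^{2vc}|\geq 3$, the vertex $z$ is a strong articulation point of $G[C^{2vc}]$, contradicting $2$-vertex-connectivity. The only preliminary fact being used that deserves a one-line mention in a written-up proof is that $r$ is indeed an ancestor of every $u\in C^{2vc}$ (so $r$ lies on the root-to-$u$ path and $r\ne z$ forces $r$ to be a proper ancestor of $z$); you state this implicitly and it is correct.
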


In Figure \ref{fig:2vcbutnot2vtcgexample}, the graph is $2$-vertex-connected but it is not $2$-vertex-twinless-connected. 
\begin{figure}[htbp]
	\centering
	\scalebox{0.96}{
		\begin{tikzpicture}[xscale=2]
		\tikzstyle{every node}=[color=black,draw,circle,minimum size=0.9cm]
		\node (v1) at (-1.2,3.1) {$1$};
		\node (v2) at (-2.5,0) {$2$};
		\node (v5) at (-0.5, -2.5) {$5$};
		\node (v4) at (1,-0.5) {$4$};
		\node (v3) at (1.2,3.1) {$3$};
		\node (v6) at (3.6,0) {$6$};

		\begin{scope}   
		\tikzstyle{every node}=[auto=right]   
		\draw [-triangle 45] (v1) to (v2);
		\draw [-triangle 45] (v2) to [bend left ](v1);
		\draw [-triangle 45] (v1) to (v5);
		\draw [-triangle 45] (v5) to (v3);
		\draw [-triangle 45] (v2) to [bend right ] (v5);
		\draw [-triangle 45] (v5) to (v2);
		\draw [-triangle 45] (v3) to [bend left ](v4);
		\draw [-triangle 45] (v4) to (v5);
		
		\draw [-triangle 45] (v4) to (v3);
		\draw [-triangle 45] (v1) to (v3);
		\draw [-triangle 45] (v2) to (v3);
		\draw [-triangle 45] (v3) to[bend right ] (v1);
		\draw [-triangle 45] (v4) to(v1);
		\draw [-triangle 45] (v2) to (v4);
		
		\draw [-triangle 45] (v6) to [bend left ](v4);
		\draw [-triangle 45] (v4) to (v6);
		\draw [-triangle 45] (v6) to (v3);
		\draw [-triangle 45] (v3) to[bend left ] (v6);
		\draw [-triangle 45] (v5) to[bend right ] (v4);
		\end{scope}
		\end{tikzpicture}}
	\caption{a $2$-vertex-connected graph $G=(V,E)$, but this graph is not $2$-vertex-twinless-connected. This graph contains one $2$-vertex-twinless-connected component $C^{2vt}_{1}=\left\lbrace 1,2,3,4,5\right\rangle $.  Notice that the vertex $4$ is a twinless articulation point in $G$. Furthermore, $C^{2vt}_{1}\setminus \left\lbrace 4 \right\rbrace $ is twinless strongly connected. }
	\label{fig:2vcbutnot2vtcgexample}
\end{figure}
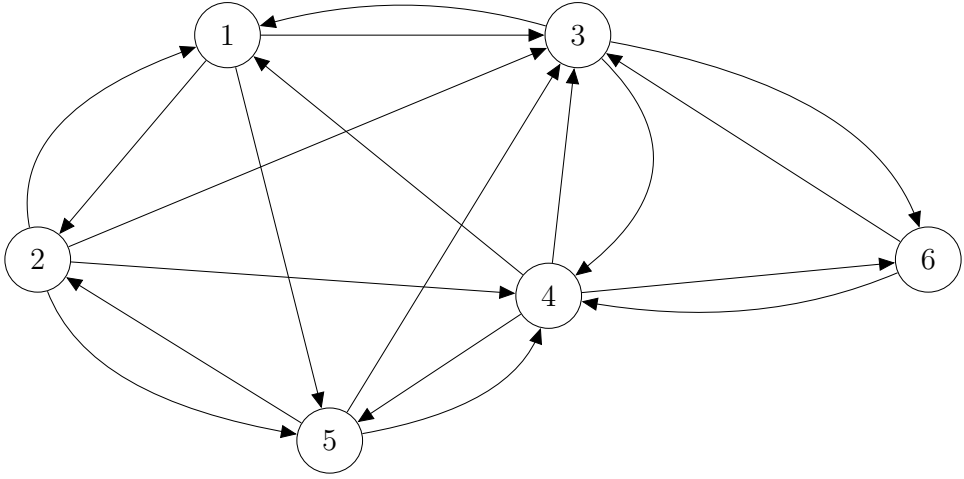
 \begin{Lemma} \label{def:twinlessarticulation}
 Let $G=(V,E)$ be a $2$-vertex-connected graph and let $C^{2vt}$ be a $2$-vertex-twinless-connected component of $G$. Let $v$ be a twinless articulation point in $G$. Then $C^{2vt}\setminus \left\lbrace v \right\rbrace $ is a subset of a twinless strongly connected component of $G\setminus\left\lbrace v \right\rbrace $. 
 \end{Lemma}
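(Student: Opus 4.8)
The plan is to argue by contradiction: suppose $C^{2vt}\setminus\{v\}$ is \emph{not} contained in a single twinless strongly connected component of $G\setminus\{v\}$. First I would note that, since $G$ is $2$-vertex-connected, the subgraph $G\setminus\{v\}$ is still strongly connected; hence its twinless strongly connected components are precisely captured by the TSCC component graph, whose underlying graph is a tree by \cite{SR06}[Theorem~1], each tree edge corresponding to a pair of antiparallel edges of $G\setminus\{v\}$. Under our assumption, the vertices of $C^{2vt}\setminus\{v\}$ occupy at least two distinct nodes of this tree, so there exist $x,y\in C^{2vt}\setminus\{v\}$ lying in different twinless strongly connected components of $G\setminus\{v\}$, and on the tree path between their components there is a tree edge, i.e.\ a pair of antiparallel edges $(i,j),(j,i)$ in $G\setminus\{v\}$ whose removal (of either one) disconnects the component tree.

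The second step is to exploit that $C^{2vt}$ itself is $2$-vertex-twinless-connected, so in particular the induced subgraph $G[C^{2vt}]$ is twinless strongly connected; by \cite{SR06}[Theorem~1] applied to $G[C^{2vt}]$, every such antiparallel pair that is "essential" corresponds to a bridge-like edge in the component structure of $G[C^{2vt}]$ — but $G[C^{2vt}]$, being $2$-vertex-twinless-connected (hence $2$-edge-twinless-connected in the relevant sense), cannot have its twinless strong connectivity destroyed by deleting a single vertex other than through a twinless articulation point. So I would locate, among $x$ and $y$ and the separating antiparallel pair, a witness showing that $G[C^{2vt}]\setminus\{v\}$ fails to be twinless strongly connected; but $C^{2vt}\setminus\{v\}$ is twinless strongly connected as an induced subgraph of $G$ because $v$ being a twinless articulation point of $G$ together with $C^{2vt}$ being a maximal $2$-vertex-twinless-connected set forces (by Lemma~\ref{def:2vtcatmostonevertex} and the definition of $2$-vertex-twinless-connectivity) that removing $v$ leaves $C^{2vt}\setminus\{v\}$ twinless strongly connected. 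This is the contradiction.

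More carefully, the clean route is: since $C^{2vt}$ is $2$-vertex-twinless-connected and $|C^{2vt}|\ge 3$, for the single vertex $v$ the induced subgraph on $C^{2vt}\setminus\{v\}$ \emph{is} twinless strongly connected by definition of $k$-vertex-twinless-connectivity with $k=2$ (this holds whether or not $v\in C^{2vt}$: if $v\notin C^{2vt}$ it is immediate, if $v\in C^{2vt}$ it is the $U=\{v\}$ case). Therefore $C^{2vt}\setminus\{v\}$ induces a twinless strongly connected subgraph of $G$, hence of $G\setminus\{v\}$, and any twinless strongly connected subgraph lies inside one twinless strongly connected component of the ambient graph by \cite{SR06}[Lemma~1] — so $C^{2vt}\setminus\{v\}$ is a subset of a single twinless strongly connected component of $G\setminus\{v\}$, as claimed.

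The main obstacle is the bookkeeping around the two cases $v\in C^{2vt}$ versus $v\notin C^{2vt}$ and making sure the definition of $2$-vertex-twinless-connectivity (which quantifies over $U\subset V$ with $|U|<2$, i.e.\ $|U|\le 1$) is being applied to the induced subgraph on $C^{2vt}$ correctly; once that is pinned down, the rest is a direct invocation of \cite{SR06}[Lemma~1] on twinless strongly connected subgraphs lying within a single component, and no path-surgery is actually needed — the TSCC-tree discussion is only a sanity check, not a necessary ingredient.
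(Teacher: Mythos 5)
Your ``clean route'' is exactly the paper's proof: $G[C^{2vt}\setminus\{v\}]$ is twinless strongly connected directly from the definition of $2$-vertex-twinless-connectivity applied to $U=\{v\}$ (and trivially if $v\notin C^{2vt}$), so $C^{2vt}\setminus\{v\}$ induces a twinless strongly connected subgraph of $G\setminus\{v\}$ and therefore lies inside a single twinless strongly connected component of $G\setminus\{v\}$. The TSCC-tree/contradiction detour in your opening paragraph is superfluous, as you yourself note.
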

\begin{proof}
Notice that $G[C^{2vt}\setminus \left\lbrace v \right\rbrace]$ is twinless strongly connected because it has no twinless articulation points. Therefore, $C^{2vt}\setminus \left\lbrace v \right\rbrace $ is a subset of a twinless strongly connected component of $G\setminus\left\lbrace v \right\rbrace $.
\end{proof}
We use $M(w)$ to denote the set of direct successors of vertex $w$ in the dominator tree of a flowgraph. Algorithm \ref{algo:All2VtsCCsRaedJaberitherel2vccanddt} shows our  algorithm for computing all the $2$-vertex-twinless connected components of a twinless strongly connected graph $G$.
\begin{figure}[htp]
	\begin{myalgorithm}\label{algo:All2VtsCCsRaedJaberitherel2vccanddt}\rm\quad\\[-5ex]
		\begin{tabbing}
			\quad\quad\=\quad\=\quad\=\quad\=\quad\=\quad\=\quad\=\quad\=\quad\=\kill
			\textbf{Input:} A twinless strongly connected graph $G=(V,E)$.\\
			\textbf{Output:} The $2$-vertex-twinless connected components of $G$.\\
			{\small 1}\> \textbf{if} $G$ is $2$-vertex-twinless connected \textbf{then} \\
			{\small 2}\>\> \textbf{Output} $V$. \\
			{\small 3}\> \textbf{else}\\
			{\small 4}\>\> Compute the strong articulation points of $G$.\\
			{\small 5}\>\> \textbf{if} $G$ is not $2$-vertex-connected \textbf{then}\\
			{\small 6}\>\>\> \textbf{if} all vertices in $G$ are strong articulation points \textbf{then}\\
			{\small 7}\>\>\>\> Choose any vertex $v\in V$\\
			{\small 8}\>\>\> \textbf{else}\\
			{\small 9}\>\>\>\> Choose a vertex $v \in V$ that is not a strong articulation .\\
			{\small 10}\>\>\>\> point of $G$.\\
			{\small 11}\>\>\> Compute the dominator trees of the flowgraph $(V,E,v)$ and  \\
			{\small 12}\>\>\> $(V,E^{r},v)$, and choose the dominator tree of  that contains moe \\
			{\small 13}\>\>\>\> non-trivial dominators. \\
			{\small 14}\>\>\> \textbf{for} each vertex $w\in V$ \textbf{do}\\
			{\small 15}\>\>\>\> \textbf{if} $|M(w)| \geq 2$ \textbf{then}  \\
			{\small 16}\>\>\>\>\> \textbf{if} $G[M(w)\cup \lbrace w \rbrace]$ is not twinless strongly connected \textbf{then}\\
			{\small 17}\>\>\>\>\>\>  Compute the twinless strongly connected components of.\\
			{\small 18}\>\>\>\>\>\>  $G[M(w)\cup \lbrace w\rbrace]$.\\
			{\small 19}\>\>\>\>\>\>  \textbf{for} each twinless strongly connected component $C$ of them \textbf{do}\\
			{\small 20}\>\>\>\>\>\>\> \textbf{if} $|C|\geq 3$ \textbf{then} \\
			{\small 21}\>\>\>\>\>\>\>\> Recursively compute the $2$-vertex-twinless connected \\
			{\small 22}\>\>\>\>\>\>\>\>  components of $G[C]$ and \textbf{output} them.  \\
			{\small 23}\>\>\>\>\> \textbf{else}\\
			{\small 24}\>\>\>\>\>\> Recursively compute the $2$-vertex-twinless connected components\\
			 {\small 25}\>\>\>\>\>\>of $G[M(w)\cup \lbrace w \rbrace]$  \\
	    	{\small 26}\>\>\>\>\>\>  and \textbf{output} them.\\
	    	{\small 27}\>\> \textbf{else}\\
	    	{\small 28}\>\>\> find a twinless articulation point $v$ in $G$.\\
	    	{\small 29}\>\>\> \textbf{for} each twinless strongly connected component $C$ of $G\setminus\left\lbrace v \right \rbrace $ \textbf{do}\\
	    	{\small 30}\>\>\>\> \textbf{for} each twinless strongly connected component $C_{c}$ of $G[C\cup\left\lbrace v\right \rbrace ] $ \textbf{do}\\ 
	    	{\small 31}\>\>\>\>\>Recursively compute the $2$-vertex-twinless connected components of $G[C_{c}]$.
	    	
		\end{tabbing}
	\end{myalgorithm}
\end{figure}
\begin{Lemma}
Algorithm \ref{algo:All2VtsCCsRaedJaberitherel2vccanddt} computes the $2$-vertex-twinless connected components of a twinless strongly connected component.
\end{Lemma}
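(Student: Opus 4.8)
The plan is to prove correctness by induction on $|V|$, following the case structure of Algorithm~\ref{algo:All2VtsCCsRaedJaberitherel2vccanddt}. The base case is when $G$ is itself $2$-vertex-twinless connected, which is handled by line~2, so suppose $G$ is not. First I would observe that every $2$-vertex-twinless connected component $C^{2vt}$ with $|C^{2vt}|\geq 3$ is, by Lemma~\ref{def:2vtsubsetof2vc}, contained in a single $2$-vertex-connected component of $G$; and moreover, by the definition, $C^{2vt}$ contains no twinless articulation point of the induced subgraph $G[C^{2vt}]$. The two-way split in the algorithm is on whether $G$ is $2$-vertex-connected. If it is not (lines~5--26), we pick a vertex $v$ that is (if possible) not a strong articulation point and invoke Theorem~\ref{def:threl2vtcsanddt}: every $2$-vertex-connected component of $G$ appears in the chosen dominator tree $DT(v)$ as a set of the form $M(w)\cup\{w\}$ (minus $w$ in one of the two cases). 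Hence every $2$-vtcc of $G$ is contained in some set $M(w)\cup\{w\}$, and it suffices to recurse inside each such set. The refinement in lines~16--22 is that when $G[M(w)\cup\{w\}]$ is not twinless strongly connected, a $2$-vtcc cannot straddle two twinless strongly connected components of $G[M(w)\cup\{w\}]$ (a $2$-vtcc is in particular twinless strongly connected), so we further restrict to each such component $C$ with $|C|\geq 3$ and recurse on $G[C]$.

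Next I would handle the case that $G$ \emph{is} $2$-vertex-connected but not $2$-vertex-twinless connected (lines~27--31). Then $G$ has a twinless articulation point $v$ but no (ordinary) articulation point. By Lemma~\ref{def:twinlessarticulation}, for every $2$-vtcc $C^{2vt}$ of $G$, the set $C^{2vt}\setminus\{v\}$ lies inside one twinless strongly connected component $C$ of $G\setminus\{v\}$, hence $C^{2vt}\subseteq C\cup\{v\}$; and since $C^{2vt}$ is twinless strongly connected, it further lies inside one twinless strongly connected component $C_c$ of $G[C\cup\{v\}]$. So recursing on each such $G[C_c]$ (line~31) captures every $2$-vtcc. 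In both cases the key facts I need are: (i) every $2$-vtcc of $G$ is contained in one of the subgraphs on which we recurse; and (ii) conversely, a subset that is $2$-vertex-twinless connected inside one of these subgraphs is also $2$-vertex-twinless connected in $G$ — this is because $2$-vertex-twinless connectivity of an induced subgraph $G[U]$ depends only on $G[U]$, so maximality is preserved as long as no two of the recursion subgraphs can be merged into a larger $2$-vtcc, which follows from (i).

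For termination and well-definedness of the induction, I would check that each recursive call is on a strictly smaller vertex set: in lines~19--22 we need $|C|\geq 3$ (enforced) and $C\subsetneq V$ because $M(w)\cup\{w\}\subsetneq V$ whenever $G$ is not $2$-vertex-connected and we recurse only when there is a nontrivial dominator; in lines~29--31, $|C|<|V|$ since $v\notin C$, and we must argue $G[C_c]$ is again twinless strongly connected so that the recursive call is legitimate (it is, being an induced subgraph that forms a twinless strongly connected component). A subtle point to nail down is that in line~16 the branch where $G[M(w)\cup\{w\}]$ \emph{is} twinless strongly connected still recurses correctly on the whole set $M(w)\cup\{w\}$, and that the case $|M(w)\cup\{w\}|<3$ produces no component of size $\geq 3$ and can be skipped — so the output consists exactly of the maximal sets found at the leaves of the recursion.

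The main obstacle I expect is establishing the ``no merging'' direction (ii) cleanly: one must show that two $2$-vertex-twinless connected sets living in different recursion branches cannot combine into a single larger $2$-vertex-twinless connected set in $G$. For the $2$-vertex-connected split this is exactly the content of Theorem~\ref{def:threl2vtcsanddt} together with Lemma~\ref{def:2vtcatmostonevertex} (distinct $2$-vtcc's share at most one vertex, so they behave like a laminar/tree-like family and the dominator-tree decomposition is the right granularity); for the twinless-articulation-point split it rests on Lemma~\ref{def:twinlessarticulation}. I would therefore structure the proof so that, after the containment direction (i) is in place, direction (ii) reduces to: any maximal $2$-vertex-twinless connected set is forced, by the above lemmas, to sit inside a unique recursion subgraph at each level, hence equals the maximal such set computed there by the inductive hypothesis.
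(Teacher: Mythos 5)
Your proposal is essentially the paper's own proof, expanded. The paper's argument is a one-line citation of Theorem~\ref{def:threl2vtcsanddt}, Lemma~\ref{def:2vtcatmostonevertex}, Lemma~\ref{def:2vtsubsetof2vc}, and Lemma~\ref{def:twinlessarticulation}, and you have reconstructed how those four facts plug into the two-way case split of the algorithm: the dominator-tree decomposition (via Theorem~\ref{def:threl2vtcsanddt} and Lemma~\ref{def:2vtsubsetof2vc}) when $G$ is not $2$-vertex-connected, and the twinless-articulation-point decomposition (via Lemma~\ref{def:twinlessarticulation}) when it is, with Lemma~\ref{def:2vtcatmostonevertex} supplying the ``recursion subgraphs pairwise intersect in at most one vertex'' fact that makes the maximality/no-merging direction go through. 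Same route, same key ingredients, just fleshed out.
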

\begin{proof}
	It follows from Theorem \ref{def:threl2vtcsanddt}, Lemma \ref{def:2vtcatmostonevertex}, Lemma \ref{def:2vtsubsetof2vc}, and Lemma \ref{def:twinlessarticulation}.
\end{proof}
\begin{theorem}
	The running time of Algorithm \ref{algo:All2VtsCCsRaedJaberitherel2vccanddt} is $O(n^{2}m)$.
\end{theorem}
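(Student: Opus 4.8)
The plan is to bound the total work by analysing the recursion tree of Algorithm~\ref{algo:All2VtsCCsRaedJaberitherel2vccanddt} and charging $O(nm)$ to each node of that tree. First I would establish that every recursive call is made on an induced subgraph $G[U]$ with $|U|\geq 3$, and that the subsets $U$ passed to the sibling calls at any given level are obtained either as $M(w)\cup\{w\}$ for vertices $w$ of a dominator tree, or as twinless strongly connected components of such sets, or (in the non-$2$-vertex-connected branch handled in lines~27--31) as the sets $C_c$ arising from a twinless articulation point $v$. In all cases the sets handed down overlap pairwise in at most one vertex: for the dominator-tree branch this follows because the sets $M(w)\cup\{w\}$ are essentially disjoint except for shared tree-parents, and after further splitting into twinless strongly connected components they can only shrink; for the articulation-point branch it follows from Lemma~\ref{def:2vtcatmostonevertex} together with the observation that distinct components $C$ of $G\setminus\{v\}$ are vertex-disjoint. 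Hence, along any root-to-leaf path the sizes strictly decrease, so the recursion depth is $O(n)$, and at each fixed depth the total number of vertices over all active subproblems is $O(n)$ (allowing a constant-factor slack for the shared cut vertices).

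Next I would account for the non-recursive cost incurred at a single invocation on $G[U]$ with $|U|=n'$ and $m'$ edges. Testing whether $G[U]$ is $2$-vertex-twinless connected, computing strong articulation points, testing $2$-vertex connectivity, building the two dominator trees of the flowgraphs $(U,E_U,v)$ and $(U,E_U^r,v)$, and computing twinless strongly connected components of $G[M(w)\cup\{w\}]$ can each be done in linear time $O(n'+m')$ by the cited results (Italiano et al.\ \cite{ILS12}, Georgiadis \cite{G10}, Raghavan \cite{SR06}, and the linear-time dominator algorithms \cite{BGKRTW00,AHLT99,FILOS12}). The one step that is not obviously linear is the \textbf{for}-loop over all vertices $w$ in lines~14--26: for each $w$ with $|M(w)|\geq 2$ we must test twinless strong connectivity of $G[M(w)\cup\{w\}]$ and possibly split it. Since the sets $M(w)$ over all $w$ are pairwise disjoint (each vertex has a unique parent in the dominator tree), $\sum_w |M(w)\cup\{w\}| = O(n')$, but the induced subgraphs $G[M(w)\cup\{w\}]$ could collectively carry up to $\Theta(m')$ edges, and each twinless-strong-connectivity test on them is linear in that subgraph's size; summed over $w$ this is $O(n'+m')$. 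So the non-recursive work at one node is $O(n'+m') = O(n+m) = O(m)$ (assuming, as is standard for twinless strongly connected graphs, that $m = \Omega(n)$).

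Combining the two estimates: there are $O(n)$ levels, each level holds subproblems whose vertex counts sum to $O(n)$ and whose edge counts sum to $O(m)$, and the per-node overhead is linear in the node's size; hence each level costs $O(n+m)=O(m)$ beyond the split, and the edge-budget argument gives a total of $O(nm)$ per level in the worst case where a single subproblem retains most of the edges. Multiplying the $O(n)$ levels by the $O(nm)$ per-level bound yields the claimed $O(n^2m)$ running time. The main obstacle I anticipate is making the per-level edge accounting rigorous: because the cut vertex $v$ (and the dominator-tree parent $w$) is shared among several children, an edge of $G$ incident to such a shared vertex may appear in more than one subproblem at the same level, so one must argue that each original edge is replicated only $O(1)$ times per level --- which holds because a shared vertex lies in at most a bounded number of the sibling sets $M(w)\cup\{w\}$ (in fact exactly those indexed by its dominator-tree children plus itself, but the split into twinless strongly connected components and Lemma~\ref{def:2vtcatmostonevertex} keep the overlap to a single vertex between any two final pieces). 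Handling this bookkeeping carefully, rather than any deep new idea, is what the proof really comes down to.
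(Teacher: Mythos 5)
Your argument follows essentially the same route as the paper's own proof: bound the recursion depth by $n$ (each recursive call strictly decreases the vertex count), observe that the sibling subproblems generated at a single call are edge-disjoint and overlap in at most one vertex, and invoke the linear-time primitives of Raghavan, Italiano et al., and the linear-time dominator-tree algorithms to charge $O(n'+m')$ to each recursion-tree node. In fact your justification for the disjointness of the sets $M(w)\cup\{w\}$ (each vertex has a unique parent in the dominator tree) is cleaner than the paper's, which gestures at Lemma~\ref{def:2vtcatmostonevertex}, a statement about the final output components rather than about the dominator-tree split.

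The one place your write-up goes wrong is the last arithmetic step. You correctly establish that the subproblems at any fixed recursion depth have vertex counts summing to $O(n)$ and edge counts summing to $O(m)$ with only constant replication at shared cut vertices, and that the per-node non-recursive work is linear in the node's own size. Those two facts give a per-level cost of $O(n+m)=O(m)$, not the ``$O(nm)$ per level'' you assert immediately afterward; nothing in your ``edge-budget argument'' produces an extra factor of $n$ at a level, since a single subproblem retaining most of the edges still contributes only $O(m)$. Multiplying your own (correct) $O(m)$-per-level figure by the $O(n)$ recursion depth gives $O(nm)$, which is \emph{stronger} than the claimed $O(n^2m)$; the jump to ``$O(nm)$ per level'' is unjustified, though harmless here because the looser $O(n^2m)$ bound is still implied. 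It is worth noting that the paper's own proof is no more explicit about where the second factor of $n$ comes from, so the bound in the theorem statement appears not to be tight under either your accounting or the paper's.
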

\begin{proof}
By [\cite{SR06}, Theorem $2$], Raghavan's algorithm \cite{SR06} is able to test whether a directed graph is twinless strongly connected in linear time. 
 All the strong articulation points of a directed graph can also be calculated in linear time using the algorithm of Italiano et al. \cite{ILS12}. The dominator tree of a flowgraph can be found in linear time \cite{AHLT99,BGKRTW00,GT05,LT79}. For each pair of distinct vertices $x,y$, by Lemma \ref{def:2vtcatmostonevertex}, the subgraph $|M(x)\cup\left\lbrace x \right\rbrace |,|M(y)\cup\left\lbrace y \right\rbrace |$ are disjoint.
  Raghavan's algorithm \cite{SR06} can find all the twinless strongly connected components of a directed graph in linear time. Moreover, the edge sets of the subgraphs considered in Lines $30$--$31$ are disjoint. The recursion depth is at most $n$ because the graph in a recursion call has less vertices than the original one.  
\end{proof}

\section{Open problems}
We leave as an open problem  whether there is a linear time algorithm for testing $2$-vertex-twinless connectivity.
Another open question is  whether the twinless articulation points, twinless bridges, and the $2$-vertex-twinless connected components of a directed graph can be calculated in linear time.


\begin{thebibliography}{4}
	\bibitem {AHLT99} S. Alstrup, D. Harel, P.W. Lauridsen, M. Thorup, Dominators in linear time,
	SIAM J. Comput. $28$($6$) ($1999$) $2117$--$2132$. 
	\bibitem {BGKRTW00} A. L. Buchsbaum, L. Georgiadis, H. Kaplan, A. Rogers, R. E. Tarjan, J. R. Westbrook,
	Linear-Time Algorithms for Dominators and Other Path-Evaluation Problems. SIAM J. Comput. $38(4): (2008) 1533$--$1573$ 	
    \bibitem {CLRS09} T. Cormen, C. Leiserson, R. Rivest, C. Stein. "Chapter 21: Data structures for Disjoint Sets". Introduction to Algorithms (Third ed.). MIT Press. $(2009)$ pp. $571$--$572$	
	\bibitem {FILOS12} D. Firmani, G.F. Italiano, L. Laura, A. Orlandi, F. Santaroni, Computing strong articulation points and strong bridges in large scale graphs, SEA, LNCS $7276$, ($2012$) $195$--$207$. 
	\bibitem {FJ81} G. N. Frederickson, J. JáJá, Approximation Algorithms for Several Graph Augmentation Problems. SIAM J. Comput. $10(2)$ $(1981)$ $270$--$283$.
	\bibitem {CJ79} M.R. Garey, D.S. Johnson, Computers and Intractability: A Guide to the Theory
	of NP-Completeness, W. H. Freeman, San Francisco, $(1979)$.
	\bibitem {G10} L. Georgiadis, Testing $2$-vertex connectivity and computing pairs of vertex-disjoint s-t paths in digraphs, In Proc. $37$th ICALP, Part I, LNCS $6198$ ($2010$) $738$--$749$.
	\bibitem {GT05} L. Georgiadis, R.E. Tarjan, Dominator tree verification and vertex-
	disjoint paths, In Proceedings of the 16th ACM-SIAM Symposium on Discrete Algorithms ($2005$) $433$--$442$.
	\bibitem {GILP14SODA} L. Georgiadis, G.F. Italiano, L. Laura, N. Parotsidis, $2$-Edge Connectivity in Directed Graphs, SODA ($2015$) $1988$--$2005$.
	\bibitem {GILP14VertexConnectivity} L. Georgiadis, G.F. Italiano, L. Laura, N. Parotsidis, $2$-Vertex Connectivity in Directed Graphs, ICALP $(1)2015: 605$--$616$
	\bibitem{HKL15} M. Henzinger, S. Krinninger, V. Loitzenbauer,
	Finding $2$-Edge and $2$-Vertex Strongly Connected Components in Quadratic Time. ICALP $(1) 2015:$ $713$--$724$
	\bibitem {ILS12} G.F. Italiano, L. Laura, F. Santaroni,
	Finding strong bridges and strong articulation points in linear time, Theoretical Computer Science $447$ ($2012$) $74$--$84$.
	\bibitem{Jaberi16} R. Jaberi,
	On computing the 2-vertex-connected components of directed graphs. Discrete Applied Mathematics 204: $(2016)164$--$172$ 
	\bibitem {J14} R. Jaberi, On Computing the $2$-vertex-connected components of directed graphs, (January, $2014$) CoRR abs/$1401.6000$ .
	\bibitem{Jaberi15} R. Jaberi,
	Computing the 2-blocks of directed graphs. RAIRO - Theor. Inf. and Applic. $49(2) (2015) 93$--$119 $
	\bibitem {KRY94} S. Khuller, B. Raghavachari, N.E. Young, Approximating the Minimum Equivalent Diagraph. SODA $(1994)$ $177$--$186$.
	 \bibitem {LT79} T. Lengauer, R.E. Tarjan, A fast algorithm for finding dominators in a flowgraph. ACM Trans. Program. Lang. Syst. $1(1)$ ($1979$) $121$--$141$.
	  \bibitem{LGILP15} W. D. Luigi, L. Georgiadis, G. F. Italiano, L. Laura, N. Parotsidis,
	 2-Connectivity in Directed Graphs, An Experimental Study. ALENEX. ($2015$) $173$--$187$.
	\bibitem {SR06} S. Raghavan, Twinless Strongly Connected Components, Perspectives in Operations Research, ($2006$) $285$--$304$.
   \bibitem {JS13} J. Schmidt, A Simple Test on 2-Vertex- and 2-Edge-Connectivity, Information Processing Letters, $113$ ($7$) ($2013$) $241$–-$244$
	\bibitem {T72} R.E. Tarjan, Depth-first search and linear graph algorithms, SIAM J. Comput. $1$($2$) ($1972$) $146$--$160$
	\bibitem {T74} R.E. Tarjan, A note on finding the bridges of a graph, Information Processing Letters, SIAM J. Comput. $2$($6$) ($1974$.) $160$--$161$
	\bibitem {ZNI03} L. Zhao, H. Nagamochi, T. Ibaraki, A linear time $5/3$-approximation for the minimum strongly-connected spanning subgraph problem, Inf. Process. Lett. $86$ ($2003$) $63$--$70$.
\end{thebibliography}
\end{document}